\documentclass[a4paper, reqno]{amsart}
\usepackage{amsthm}
\usepackage{amsmath}
\usepackage{geometry}
\usepackage{amsfonts}
\usepackage{graphicx}
\usepackage{array}
\usepackage{amssymb}
\usepackage{mathrsfs}

\usepackage{pgf}
\usepackage{tikz}
\usetikzlibrary{arrows}
\usetikzlibrary{cd}
\usetikzlibrary{matrix}
\usetikzlibrary{positioning}
\usepackage{color}
\usepackage{empheq}

\usepackage{amsaddr}

\setcounter{MaxMatrixCols}{10}

\newtheorem{theorem}{Theorem}


\begin{document}

\markboth{A. Bautista, A. Ibort, J. Lafuente}{Sky conformal invariant}

\title{The sky invariant: A new conformal invariant for Schwarzschild spacetime}

\author{A. BAUTISTA}
\address{\small Depto. de An\'alisis Econ\'omico: Econom\'{\i}a Cuantitativa, Univ. Aut\'onoma de Madrid \\ C/ Francisco Tom\'as y Valiente 5, 28049 Madrid, Spain. }
\email{alfredo.bautista@uam.es} 

\author{A. IBORT}
\address{\small Depto. de Matem\'aticas, Univ. Carlos III de Madrid \\ Avda. de la Universidad 30, 28911 Legan\'es, Madrid, Spain, and \\ ICMAT, Instituto de Ciencias Matem\'{a}ticas (CSIC-UAM-UC3M-UCM)\\ C/ Nicol\'as Cabrera, 13-15, 28049, Madrid, Spain. }
\email{albertoi@math.uc3m.es}

\author{J. LAFUENTE}
\address{\small Depto. de Geometr\'{\i}a y Topolog\'{\i}a, Univ. Complutense de Madrid \\ Avda. Complutense s/n, 28040 Madrid, Spain. }
\email{jlafuente@mat.ucm.es}

\begin{abstract}
A new class of conformal invariants for a given spacetime $M$ is introduced exploiting the conformal geometry of any light ray $\Gamma$.   Each congruence of light rays passing through a given point $p$ defines the sky $S(p)$ of such point.  The new conformal invariants are defined on the bunlde of skies of the spacetime $M$, being called sky invariants accordingly.   The natural conformal covariant derivative defined on a light ray and its associated covariant calculus allows us show the existence of a natural conformal invariant differential of arc that, together with the restriction of the curvature of the conformal covariant derivative, can be used to construct a sky invariant that will be called the sky curvature.   An algorithm, that can be implemented on any symbolic manipulation software system, to compute the sky curvature will be discussed and the main ideas and the explicit computation of the sky curvature are illustrated in Schwarzschild spacetime.
\end{abstract}

\keywords{Conformal invariants; light rays; Sky invariant; Schwarzschild spacetime.}

\maketitle

One of the reasons of the significance of conformal invariants in the study of the geometry of Lorentzian geometry and gravitation is the fact that the causal properties of a given spacetime depend solely on its conformal structure.     Conformal invariants which are polynomial on the metric tensor and its derivatives were thoroughly described by Fefferman and Clarke \cite{Fe85} culminating a long an arduous road started by Weyl and others more than half a century before (see, for instance \cite{Sz68} and references therein).        

The study of the space of light rays of a given spacetime $M$ offers an alternative way of looking at its causal structure, an idea that has its origin in R. Penrose's intuition and was brought to fruition by the hand of R. Low \cite{Lo88,Lo89,Lo90,Lo94,Lo00, Lo06} and the recent results obtained by the authors \cite{Ba14,Ba15}.   In particular, in \cite{Ba15} it was proved that the causal structure of a strongly causal sky separating spacetime is determined by a partial order relation on its space of skies and that the conformal class of the metric structure is completely determined by its infinitesimal sky structure \cite{Ba14}.    All these observations bring us the question if a search for conformal invariants program similar to that developed in the setting of the spacetime itself, could be started working on the space of light rays itself $\mathcal{N}$, the ``causal dual'' of $M$.      

The present paper offers a preliminary partial answer to this question by constructing a new conformal invariant, called the sky conformal invariant, or the sky-invariant for short, which is constructed using the geometrical ingredients available on the space $\mathcal{N}$ of light rays of $M$.      The main notion behind it is to construct a ``Ricci''-like tensor on the tangent spaces to the skies of events on $M$.   Such construction is inspired on the observation that for any parametrisation of a light ray $\Gamma$ there is a metric $g$ that makes it $g$-geodesic and which is unique up to multiplicative factor along the geodesic, hence for light rays a natural conformal covariant derivative can be defined, the so called Fermi-Walker connection.  Analyzing the structure of such conformal covariant derivative it will be shown that there is a canonical conformal invariant parametrisation obtained fromt the canonical projective parametrisation derived from the vanishing of the natural Ricci tensor associated to it. Such conformal covariant derivative, combined with the conformal parametrisation of light rays, allows to define a natural $(1,1)$-tensor $C_x \colon TS(x) \to TS(x)$ on an open set of the tangent bundle to the sky $S(x)$,  for any $x \in M$ that will be called the sky curvature. The principal elements of such tensor are the new conformal invariants we exhibit in this paper.    

It is relevant to point out here that this approach to construct the new conformal invariants is reminiscent of work done by Agrachev and collaborators on Jacobi curves and curvature invariants \cite{Ag02,Ag08,Ag15}.  Their relation to the present work will be discussed elsewhere.

The actual computation of these new invariants poses, in principle, significant difficulties as the explicit computations of the invariants would require the solution of non-linear differential equations, however, most of these complications can be circumvented and explicit formulae will be obtained for them that only require the computation of higher order derivatives of tensorial objects.   With the help of symbolic manipulation software, explicit expressions can be found and this procedure will be illustrated working them for Schwarzschild's spacetime.  

The paper will be organised as follows. In Sect. \ref{sec:lightrays} the main concepts and notations concerning the space of light rays will be succinctly reviewed.  In Sect. \ref{sec:conformal} the conformal geometry of a light ray and its canonical conformal covariant derivative will be discussed.   It will be shown that a canonical conformal parametrisation of light rays can be constructed and some of its properties will be discussed.   In Sect. \ref{sec:skycurvature} the natural curvature $(1,1)$ tensor on the tangent bundle of skies will be introduced and explicit formulas for its calculation will be presented.   Sect. \ref{sec:algorithm} will be devoted to sketch an algorithm to compute the sky-invariant of a given spacetime and, finally, Sect. \ref{sec:schwarzschild} will show the results obtained for Schwarzschild spacetime.


\section{The space of light rays $\mathcal{N}$}\label{sec:lightrays}

\subsection{The space of light rays}\label{sec:N}
Along the paper a spacetime $M$ will be considered to be a second countable paracompact $m$-dimensional smooth manifold carrying a conformal class  $\mathcal{C}$ of Lorentzian metrics of signature $(-+\cdots +)$ such that $M$ is time-orientable\footnote{A time-like vector field $T$ determining a time-orientation on $M$ will be fixed in what follows.} and strongly causal.   We will denote by $g$ a representative metric on $\mathcal{C}$ and $g_x(u,v)$, $u,v\in T_xM$, will denote the product defined by the metric $g$. 

Let $\mathcal{N}$ denote the space of unparametrized inextensible null geodesics, called in what follows \textit{light rays}, i.e., $\mathcal{N}$ is the space of equivalence classes of inextensible smooth null curves $\gamma \colon I \to M$, with $I$ an interval in $\mathbb{R}$, such that $\nabla_{\dot{\gamma}} \dot{\gamma} = 0$, $g \in \mathcal{C}$ and two such curves $\gamma$, $\tilde{\gamma}$ are equivalent if they define the same set in $M$, that is $\mathrm{Ran\,} (\gamma) = \mathrm{Ran\,} (\tilde{\gamma})$.   In what follows we will assume that the parametrisations $\gamma = \gamma(t)$ are future oriented, that is $g(\dot{\gamma}, T) < 0$,  The equivalence class containing the null geodesic $\gamma = \gamma (t)$, i.e., the light ray determined by $\gamma$  will be denoted by $\Gamma$. 

We will consider in what follows the fibre bundle $\mathbb{N}$ over $M$ consisting of nonzero null vectors, and its corresponding components $\mathbb{N}^\pm$ of future (past) null vectors.  If we denote $\mathbb{N}_x^+ = \{ v \in \mathbb{N}_x \mid v \neq 0, g_x(v, T(x)) < 0 \}$ and   $\mathbb{N}_x^- = \{ v \in \mathbb{N}_x \mid v \neq 0, g_x(v, T(x)) > 0 \}$, we have $\mathbb{N}^\pm = \bigcup_{x \in M} \mathbb{N}^\pm_x$ and $\mathbb{N} = \mathbb{N}^+ \cup \mathbb{N}^-$.  We will denote by $\pi \colon \mathbb{N} \to M$ the restriction of the canonical tangent bundle projection $TM \to M$ to $\mathbb{N}$.     

We will denote again by $\pi$ the canonical projection $\pi \colon \mathbb{PN}^+ \to M$, where $\mathbb{PN}^+$ denotes the quotient space of $\mathbb{N}^+$ by the action of the multiplicative group of positive real numbers $\mathbb{R}^+$ by scalar multiplication, i.e., $\alpha \in \mathbb{PN}^+$ denotes a ray $[u] = \{ \lambda u \mid u \in TM, \lambda > 0\}$ of tangent vectors containing $u$.   Notice that there is a canonical surjection 
\begin{equation}\label{eq:covering}
\sigma \colon \mathbb{PN}^+ \to \mathcal{N} \, , \qquad \sigma (\alpha ) = \Gamma_{\alpha} \, ,
\end{equation} 
which is the light ray containing  the unique future-oriented null geodesic $\gamma_u (t)$ such that $\gamma_u (0) = \pi (u)$, and $\dot{\gamma}_u (0) = u$, for some $g \in \mathcal{C}$.

\subsection{The smooth structure of $\mathcal{N}$}\label{atlas}
The space of light rays $\mathcal{N}$ can be equipped with the structure of a second countable paracompact smooth manifold of dimension $2m-3$ ($\dim M = m$), such that the map $\sigma$ becomes a submersion, in two different ways.   We will succinctly describe them in the following paragraphs (see, for instance, \cite{Ba22,Ba14} for details).  

First,  we can use the local structure of $M$, i.e., pick a representative $g \in \mathcal{C}$, and, because $(M,g)$ is strongly causal, given any event $x\in M$, there exists a globally hyperbolic neighbourhood $U_x$ of $x$ and a local smooth space-like Cauchy hypersurface $C_x \in U_x$ \cite{Mi08}.  We can take $U_x$ small enough such that it is contained in a local chart of $M$.  Hence we can define an atlas for $\mathcal{N}$ as follows, select for any event $x\in M$ a globally hyperbolic open neighbourhood $U_x$ as before with Cauchy hypersurface $C_x $.   We consider the restriction of the projective bundle $\mathbb{PN}^+$ to $C_x$ and we denote it by $\mathbb{PN}^+(C_x)$.   There is a natural embedding $i_x \colon \mathbb{PN}^+(C_x) \to \mathbb{PN}^+$.  Then the composition $\sigma \circ i_x \colon \mathbb{PN}^+(C_x) \to \mathcal{N}$ will provide the charts of the atlas we are looking for and the open sets $\mathcal{U}_x = \sigma \circ i_x ( \mathbb{PN}^+(C_x)) \subset \mathcal{N}$ will be the domains of the corresponding charts (see \cite[Sect. 2.3]{Ba14} for more details).

Alternatively, we can induce a smooth structure on $\mathcal{N}$ from the smooth structure of the bundle $\mathbb{N}^+$ by considering the foliation defined by the leaves of the integrable distribution generated by the vector fields $X_g$ and $\Delta$, where $X_g$ denotes the geodesic spray of a representative metric $g$ in the conformal class $\mathcal{C}$, and $\Delta$ is the dilation or Euler field along the fibres of $TM$.    Because $[X_g, \Delta] = X_g$, the distribution $D = \mathrm{span} \{ \Delta, X_g\}$ is integrable and denoting by $\mathcal{D}$ the corresponding foliation, we have that the space of leaves $\mathbb{N}^+ / \mathcal{D} \cong \mathcal{N}$.    If $M$ is strongly causal it can be shown that $\mathcal{D}$ is a regular foliation and the space of leaves inherits a smooth structure from $\mathbb{N}^+$.   Again, it is not hard to show that both smooth structures coincide (see also \cite{Ba16} for details).


\subsection{The tangent bundle $T\mathcal{N}$ and the contact structure of $\mathcal{N}$}\label{sec:tangent}
Let $\Gamma \colon (-\epsilon, \epsilon) \to \mathcal{N}$, $\Gamma = \Gamma (s)$, be a differentiable curve in $\mathcal{N}$, such that $\Gamma (0) = \Gamma$ and let $\chi (s,t) \colon (-\epsilon, \epsilon) \times I \to M$ be a geodesic variation by null geodesics of a parametrisation $\gamma (t)$ of the light ray $\Gamma$ and such that $\chi(s,t)$ is a parametrisation of the light ray $\Gamma (s)$.  In other words, $\chi$ is a smooth function such that $\chi (s,t) = \gamma_s (t)$ are null geodesics with respect to the metric $g \in \mathcal{C}$, $\gamma_0 (t)$ is a parametriwation of $\Gamma$, and $[\gamma_s] = \Gamma (s)$, where $[\gamma_s]$ denotes the unparametrized geodesic containing $\gamma_s$.  Then the vector field along $\gamma$ defined by $J(t) = \partial \chi (s,t)/\partial s \mid_{s = 0}$ is a Jacobi field.  The set of Jacobi fields along $\gamma (t)$ will be denoted by $\mathcal{J} (\gamma )$  and they satisfy the second order differential equation:
\begin{equation}\label{eq:jacobi}
J'' = R(\dot{\gamma}, J) \dot{\gamma} \, ,
\end{equation}
where $J' = \nabla_{\dot{\gamma}}J$ denotes the covariant derivative of $J$ along $\dot{\gamma}(t)$.  Notice that since the geodesic variation $\chi$ is by null geodesics, we have $g (J, \dot{\gamma} ) =$ constant, and we denote by $\mathcal{L}(\gamma)$ the linear space of Jacobi fields satisfying this property.

Equivalence classes of curves $\Gamma (s)$ possessing a first order contact define tangent vectors to $\mathcal{N}$ at $\gamma$, hence tangent vectors at $\gamma$ correspond to equivalence classes of Jacobi fields with respect to the equivalence relation defined by reparametrisation of the geodesic variation $\chi$.   Such reparametrisations will correspond to Jacobi fields of the form $(at + b) \dot{\gamma}(t)$, then there is a canonical projection $\mathcal{L}(\gamma) \to T_\Gamma \mathcal{N}$, mapping each Jacobi field $J$ into a tangent vector $\langle J \rangle = J \,\, \mathrm{mod} \dot{\gamma}$ whose kernel is given by Jacobi fields proportional to $\dot{\gamma}$. 

There is a canonical contact structure on $\mathcal{N}$ defined by the maximally non-integrable hyperplane distribution 
$\mathcal{H}_\Gamma \subset T_\Gamma \mathcal{N}$ formed by the vectors orthogonal to their supporting light ray, i.e., 
\begin{equation}\label{contact}
\mathcal{H}_\Gamma = \{ \langle J \rangle \in T_\Gamma \mathcal{N} \mid g( J, \dot{\gamma} ) = 0  \} \, .
\end{equation}
The contact structure $\mathcal{H}$ does not depend on the representative metric $g \in \mathcal{C}$, used to define the representative $J$ chosen for the tangent vector, or the parametrisation $\gamma(t)$ we choose for the light ray $\gamma$ \cite{Ba14,Ba15}.    


\subsection{Skies}\label{sec:skies}
The congruence of light rays $S\left(p\right)\subset \mathcal{N}$ passing by a point $p\in M$ characterises the point $p$. 
We will call the \emph{sky of $p$} the set:
\[
S\left(p\right)=\left\{ \Gamma\in\mathcal{N} \mid p\in\Gamma\subset M \right\}  \, .
\]
For a given $p\in M$, we can identify $S\left(p\right)$ with $\mathbb{PN}^+_p$, and since the fibre $\mathbb{PN}^+_p$ is diffeomorphic to the sphere $\mathbb{S}^{m-2}$ then we have that every sky $S\left(p\right)$ is a smooth submanifold of $\mathcal{N}$ diffeomorphic to $\mathbb{S}^{m-2}$.

By the description of $T_{\Gamma}\mathcal{N}$ in section \ref{sec:tangent}, given $x\in M$ and $\Gamma \in S\left(x\right)$ and $\gamma(t)$ a parametrisation of $\Gamma$  such that  $\gamma \left(s_{0}\right) =p$, then for $\langle J \rangle\in T_{\Gamma}S\left(p\right) \subset T_{\Gamma}\mathcal{N}$, since $g \left( J,\dot{\gamma}\right)$ is constant, and $J$ can be defined by a null geodesic variation with $p$ as a fixed point, then $J\left( s_{0}\right) =0\left( \mathrm{mod} \dot{\gamma}\right) $, hence:
\begin{equation}\label{eq-sky-tangent}
T_{\Gamma} S(p) =\{\langle J \rangle\in T_{\Gamma }\mathcal{N} \mid J\left( s_{0}\right) =0\left(\mathrm{mod}\dot{\gamma}\right) \}  \, ,
\end{equation}
and since $g \left( J\left(s_0\right),\dot{\gamma}\left(s_0\right)\right)=0$ then we have that $g \left( J, \dot{\gamma}\right)=0$, therefore $T_{\Gamma }S(p) \subset \mathcal{H}_{\Gamma }$.


\subsection{Sky conformal invariants}  It would be possible, in principle, to use the space of light rays $\mathcal{N}$ to construct conformal invariants as follows.    If $\phi \colon (M_1, \mathcal{C}_1) \to (M_2, \mathcal{C}_2)$ is a conformal diffeomorphism, that is, $\phi$ is a diffeomorphism such that, $\phi^*g_2 \in \mathcal{C}_1$ for any $g_2 \in \mathcal{C}_2$, then it maps light rays $\Gamma_1 \in \mathcal{N}_1$ into light rays $\phi (\Gamma_1)\in \mathcal{N}_2$.  Then a family of smooth maps $\mathscr{S}\colon \mathcal{N} \to \mathbb{R}$, such that $\mathscr{S}_2 (\phi(\Gamma_1)) = \mathscr{S}_1(\Gamma_1)$, for all $\Gamma_1 \in \mathcal{N}_1$, would be a conformal invariant.    However this is not a good idea as it is often hard to describe the space of light rays $\mathcal{N}$ explicitly.    We can make this notion slightly more general and at the same time much more useful by considering the bundle $\mathcal{S}$ of skies over $M$, i.e., the disjoint union of all skies $S(x)$, $x \in M$, instead.     A convenient way of understanding the space $\mathcal{S}$ is as a ``blowing up'' of the spacetime $M$ where every point $x \in M$ is replaced by the congruence of light rays passing through it, that is (see \cite{Ba17,Ba18} for details):
$$
\mathcal{S} = \{ (x,\Gamma) \in M \times \mathcal{N} \mid \Gamma \in S(x) \} \, .
$$
There is a natural projection $\rho_1 \colon \mathcal{S} \to M$, given by $\rho_1 (x,\Gamma) = x$.  Note that $\rho_1^{-1}(x)$ is the sky $S(x)$.
The projection $\rho_2 \colon \mathcal{S} \to \mathcal{N}$, $\rho_2 (x,\Gamma) = \Gamma$, shows that the space $\mathcal{N}$ ``unfolds'' the space of light rays, in fact the fibre $\rho_2^{-1}(\Gamma)$ is the one-dimensional submanifold of $M$ given by the light ray itself.

It is noticeable that the bundle of spheres $\rho_1 \colon \mathcal{S} \to M$, can be naturally identified with the bundle $\pi \colon \mathbb{P}\mathbb{N}^+ \to M$, as any sky $S(x)$ is naturally identified with the fibre $\mathbb{P}\mathbb{N}^+_x$ (see above, Sect. \ref{sec:skies}). 
Note that the space $\mathbb{P}\mathbb{N}^+$ is a fibration over $\mathcal{N}$, Eq. (\ref{eq:covering}), as well as over $M$, thus on one hand ``unfolds'' the space of light rays $\mathcal{N}$, but its bundle structure over $M$ makes it much more suitable to construct conformal invariants. 

A \textit{sky conformal invariant} (or just \textit{sky invariant}) is a family of maps $\kappa \colon \mathbb{P}\mathbb{N}^+ \to \mathbb{R}$, such that:
$$
\kappa_2 (\phi_*(\alpha_1)) = \kappa_1 (\alpha_1) \, ,
$$ 
for all $\alpha_1 \in \mathbb{P}\mathbb{N}_1^+$, $\phi \colon (M_1, \mathcal{C}_1) \to (M_2, \mathcal{C}_2)$ a conformal diffeomorphism, $\phi_* \colon \mathbb{P}\mathbb{N}_1^+ \to \mathbb{P}\mathbb{N}_2^+$, the map induced by the tangent map $\phi_*\colon TM_1 \to TM_2$, and $\kappa_a \colon \mathbb{P}\mathbb{N}_a^+ \to \mathbb{R}$, $a = 1,2$, the corresponding maps. 

There is a natural way of constructing sky conformal invariants associated to the geometrical structure of light rays.  Consider a light ray $\Gamma$ and a parametrisation $\gamma(t)$ of it and suppose that we have a map $\kappa_\gamma (t)$ such that $\kappa_{\phi\circ \gamma}(t) = \kappa_\gamma (t)$, and $\kappa_\gamma (t+a) = \kappa_\gamma (t)$ for all $a$; we will say that $\kappa_\gamma$ is a \textit{parametric conformal invariant}.   If, in addition, $\kappa_\gamma (t)$ is invariant under reparametrisations of $\gamma (t)$, that is, if $\bar{\gamma} (\bar{t}) = \gamma (\psi (\bar{t})) = \gamma (t)$, is another parametrisation of $\Gamma$ (that is, $t = \psi (\bar{t})$, is a regular, future oriented, reparametrisation of  $t$), then $\kappa_{\bar{\gamma}}(\bar{t}) = \kappa_\gamma (t)$, we will say that $\kappa_\gamma$ is an \textit{absolute conformal invariant}.  

If $\kappa_\gamma$ is an absolute conformal invariant it is obvious that it defines a map (denoted with the same symbol) $\kappa \colon  \mathbb{P}\mathbb{N}^+ \to \mathbb{R}$, by means of $\kappa (\alpha) = \kappa_{\gamma_\alpha} (0)$, where $\gamma_\alpha (t)$ is the null $g$-geodesic such that $\gamma_\alpha (0) = \pi (\alpha) = p$, and $\dot{\gamma}_\alpha (0) = u$, if $\alpha = [u]$ (notice that two geodesics satisfying the previous conditions correspond to two different parametrisations of the same light ray $\Gamma_\alpha$) and $\kappa$ is a sky conformal invariant.   Hence, absolute conformal invariants $\kappa_\gamma$ define sky invariants $\kappa$.    This constitutes the main idea behind our strategy to construct sky conformal invariants: exploit the conformal geometry of light rays to construct absolute conformal invariants in the previous sense, that will give rise to sky conformal invariants.  The details of the construction of absolute conformal invariants will be discussed in the following sections.


\section{The conformal geometry of a light ray}\label{sec:conformal}


\subsection{Metric adapted to a parametrised light ray}

Note that if $\Gamma$ is a light-ray and we parametrise it using a regular future oriented parameter $t$, i.e., $\Gamma$ is the image of a parametrised curve $\gamma = \gamma (t)$, with $\dot{\gamma}(t) \neq 0$, $g( T(t), \dot{\gamma}(t) ) < 0$, for all $t$,  then there is a metric $g\in \mathcal{C}$ such that $\gamma (t)$ is a null geodesic for $g$.    Finally, if a change of parameter $t = \psi (\bar{t})$ is performed on the curve $\gamma(t)$, i.e., $\bar{\gamma}(\bar{t}) = \gamma (\psi (\bar{t}))$, then a simple computation shows:
\begin{equation}
g_{\bar{\gamma}(t)} (\bar{t}) = e^{2\varphi (\bar{t})} g_{\gamma(t)} (\psi ( \bar{t})) \, , 
\end{equation}
where,
\begin{equation}\label{eq:change}
 \varphi(\bar{t}) = - \frac{1}{2} \log \left( \frac{d\psi}{d\bar{t}} \right) \, ,
\end{equation}
or, in other words, $g_{\bar{\gamma}} (\bar{t}) = (d\psi / d\bar{t})^{-1} g_\gamma (\psi (\bar{t}))$, with $g_\gamma (t) := g_{\gamma(t)}$. 


\subsection{Conformal covariant derivative}

We are ready now to define the conformal covariant derivative along a curve, that is, fix a metric $g \in \mathcal{C}$, then consider a curve $\gamma (t)$, $t \in I$, (not necessarily a $g$-geodesic), then we define a linear map $\nabla /dt$ from the space of vector fields along the curve $\gamma$, $A(\gamma(t)) \in T_{\gamma (t)} M$, denoted in what follows $\mathfrak{X}_\gamma$, $t \in I$, by:
$$
\frac{\nabla A}{dt} = \nabla_{\dot{\gamma}}A \, , 
$$
with $\nabla$ the Levi-Civita connection defined by the metric $g$ and $\dot{\gamma}$ the tangent vector to $\gamma (t)$.   If $\gamma$ is $g$-pregeodesic, the map $\nabla/dt$ maps the space of vectors orthogonal to $\dot{\gamma}$ into itself, that is, if we denote by $\mathfrak{X}_\gamma^\perp$ the set of vectors $A\in \mathfrak{X}_\gamma$ such that $g( A,  \dot{\gamma} ) = 0$, then: $\nabla A /dt \in \mathfrak{X}^\perp_\gamma$ for each $A \in \mathfrak{X}_\gamma^\perp$.  

If $\gamma(t)$ parametrises a light ray $\Gamma$, it satisfies $g( \dot{\gamma}, \dot{\gamma} ) = 0$, then it is natural to consider the quotient space $\langle \mathfrak{X}_\gamma \rangle$ of vectors along $\gamma$ module $\dot{\gamma}$, that is $\langle A \rangle \in \langle \mathfrak{X}_\gamma \rangle$, denotes the set $A + f \gamma'$, with $f$ an arbitrary function and the subspace $\langle \mathfrak{X}_\gamma^\perp \rangle$ of equivalence classes of vector orthogonal to $\dot{\gamma}$.    Then, the map $\nabla /dt $ defined previously induces a map, denoted with the same symbol, $\nabla /dt \colon \langle \mathfrak{X}_\gamma^\perp \rangle \to \langle \mathfrak{X}_\gamma^\perp \rangle$, as:
$$
\frac{\nabla \langle A \rangle}{dt} = \left\langle \frac{\nabla A}{dt} \right\rangle \, .
$$
Note that the map $\nabla /dt$ is well defined on the quotient space and as it can be shown easily, it does not depend on the chosen metric $g \in \mathcal{C}$.     

We can summarise the previous discussion by saying that given a light ray $\Gamma$, and we choose a parametrisation $\gamma = \gamma(t)$, the conformal structure $\mathcal{C}$ induces a unique covariant derivative $\nabla^\gamma /dt$ on the quotient space $\langle \mathfrak{X}_\gamma^\perp \rangle$, that coincides with the covariant derivative determined by any metric $g\in \mathcal{C}$ such that $\gamma (t)$ is null $g$-geodesic.  We will call such map, the conformal covariant derivative on $\gamma = \gamma(t)$.

Notice that $\nabla^\gamma/dt$ depends just on the values of $g$ on $\gamma$ and it satisfies the following change of parameter formula:
$$
\frac{\nabla^{\bar{\gamma}} \langle \bar{A} \rangle}{d\bar{t}} = \frac{\nabla^\gamma \langle A \rangle}{d t} \frac{d\psi}{d\bar{t}} + \frac{d\varphi}{d\bar{t}} \langle \bar{A} \rangle \, ,
$$
with $t = \psi (\bar{t})$ a regular change of parameter, $\bar{\gamma}(\bar{t}) = \gamma (\psi (\bar{t})) = \gamma (t)$, $\bar{A} \in \langle \mathfrak{X}_{\bar{\gamma}}^\perp \rangle$, $A(\gamma(t)) = \bar{A}(\bar{\gamma}(\bar{t})) \in \langle \mathfrak{X}_{\gamma}^\perp \rangle$, and the function $\varphi (\bar{t})$ is given by Eq. (\ref{eq:change}). In particular, the reparametrisation $t = \psi (\bar{t})$ is a translation of parameters iff it induces the same conformal derivative in the light ray $\Gamma$.
Note that we can repeat the argument again, and define the conformal derivative $\nabla /dt \langle \nabla A/dt
 \rangle = \langle \nabla^2 /dt^2 A \rangle \in \langle \mathfrak{X}_\gamma^\perp \rangle$, for any vector field $\langle A \rangle \in  \langle \mathfrak{X}_\gamma^\perp \rangle$.
 

\section{Sky curvature and conformal parametrisations}\label{sec:skycurvature}


\subsection{Parametric sky curvatures}\label{sec:parametric}
Given a light ray $\Gamma$, and a $\gamma(t)$ a parametrisation of $\Gamma$, we can define a linear map $R_\gamma \colon T_\Gamma \mathcal{N} \to T_\Gamma \mathcal{N}$, given by $\langle J \rangle \mapsto \langle J'' \rangle$, where $J'' = \nabla^2 J /dt^2$, with $g\in \mathcal{C}$.   In other words, $R_\gamma \langle J \rangle = \nabla^2 \langle J \rangle /dt^2$, with $\nabla/dt$ the conformal covariant derivative along the parametrisation $\gamma(t)$ of $\Gamma$.   

Because $\langle J \rangle$ is a quotient Jacobi field, i.e., $J$ satisfies Jacobi equation (\ref{eq:jacobi}), then $R_\gamma \langle J \rangle = \langle R(\dot{\gamma}, J) \dot{\gamma} \rangle$.  Even more, because $R_\gamma$ is defined pointwise, we can define for each $t$, the linear map:
\begin{equation}\label{eq:Rgamma}
R_\gamma(t) \colon \langle \dot{\gamma}(t)^\perp \rangle  \to \langle \dot{\gamma}(t)^\perp \rangle \, , \qquad R_\gamma(t) \langle J \rangle = \langle R(\dot{\gamma}(t), J(t)) \dot{\gamma}(t) \rangle \,   .
\end{equation}
We realise that the tangent space $T_\Gamma S(p)$ to the sky $S(p)$, $p \in M$, at the light ray $\Gamma \in S(p)$ can be identified naturally with the quotient space $T_p\Gamma^\perp/T_p\Gamma = \langle \dot{\gamma}(0)^\perp \rangle$, using the induced linear map on $T_p\Gamma^\perp$ that assigns to each tangent vector $\xi \in T_p\Gamma^\perp$, the unique Jacobi field $J_\xi(t)$ along a parametrisation $\gamma (t)$ of $\Gamma$ as a $g$-geodesic, such that $\gamma (0) = p$, $J_\xi(0) = 0$ and $J_\xi'(0) = \xi$.    It is noticeable that the map $\xi \mapsto J_\xi$, induces a map $\langle \xi \rangle \mapsto \langle J_\xi \rangle$ among the quotient classes and it does not depend on the metric $g \in \mathcal{C}$ chosen for the parametrisation of $\Gamma$.  The map is clearly injective and it is surjective because of the description of the tangent space to the sky $S(p)$, Eq. (\ref{eq-sky-tangent}). Thus we can state:

\begin{theorem}  Let $\Gamma$ be a light ray in $\mathcal{N}$, and $p \in \Gamma$ a fixed point in the light ray.   
Then there is a canonical isomorphism between the linear space $T_p\Gamma^\perp /T_p\Gamma$ and the tangent space $T_\Gamma S(p)$ of the sky $S(p) \subset \mathcal{N}$ at $\Gamma$.
\end{theorem}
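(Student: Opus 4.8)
The plan is to produce the isomorphism explicitly from the Jacobi-field description of $T_\Gamma\mathcal N$ and then to check that the construction uses only data intrinsic to the conformal class $\mathcal C$. First I would fix a representative $g\in\mathcal C$ and a null $g$-geodesic parametrisation $\gamma=\gamma(t)$ of $\Gamma$ with $\gamma(0)=p$, writing $u=\dot\gamma(0)$, and define
\[
\Phi\colon T_p\Gamma^\perp\longrightarrow \mathcal L(\gamma),\qquad \Phi(\xi)=J_\xi,
\]
where $J_\xi$ is the unique solution of the Jacobi equation \eqref{eq:jacobi} with initial data $J_\xi(0)=0$ and $J_\xi'(0)=\xi$ (existence and uniqueness being the standard fact for the linear second-order system \eqref{eq:jacobi}). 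Before anything else I would check that $\Phi$ takes values where expected: differentiating $g(J_\xi,\dot\gamma)$ twice along $\gamma$ and using that $\gamma$ is a geodesic together with the curvature symmetry $g(R(\dot\gamma,J_\xi)\dot\gamma,\dot\gamma)=0$, one sees that $g(J_\xi,\dot\gamma)$ is affine in $t$; its value and derivative at $t=0$ are $g(J_\xi(0),u)=0$ and $g(\xi,u)=0$, the latter precisely because $\xi\in T_p\Gamma^\perp$, so $g(J_\xi,\dot\gamma)\equiv 0$ and hence $J_\xi\in\mathcal L(\gamma)$. Since moreover $J_\xi(0)=0$, the class $\langle J_\xi\rangle$ lies in $T_\Gamma S(p)$ by the description \eqref{eq-sky-tangent}.

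Next I would pass to the quotient and exhibit an inverse. Composing $\Phi$ with the canonical projection $\mathcal L(\gamma)\to T_\Gamma\mathcal N$ yields a linear map $\xi\mapsto\langle J_\xi\rangle$, and this map annihilates $T_p\Gamma=\langle u\rangle$ because the Jacobi field with data $(0,u)$ is $J_u(t)=t\,\dot\gamma(t)$, which is of the form $(at+b)\dot\gamma$ spanning the kernel of the projection, so $\langle J_u\rangle=0$. Thus $\Phi$ descends to a well-defined linear map
\[
\bar\Phi\colon T_p\Gamma^\perp/T_p\Gamma\longrightarrow T_\Gamma S(p).
\]
To invert it I would use the conformal covariant derivative of Section \ref{sec:conformal}, setting
\[
\Psi\colon T_\Gamma S(p)\longrightarrow T_p\Gamma^\perp/T_p\Gamma,\qquad \Psi(\langle J\rangle)=\left.\frac{\nabla^\gamma\langle J\rangle}{dt}\right|_{t=0}.
\]
This is well defined: since $T_\Gamma S(p)\subset\mathcal H_\Gamma$ by \eqref{contact} we have $\langle J\rangle\in\langle\mathfrak X_\gamma^\perp\rangle$, and evaluating at $t=0$ lands in $\langle u^\perp\rangle=T_p\Gamma^\perp/T_p\Gamma$. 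A short computation using a representative with $J(0)=0$ gives $\Psi(\langle J_\xi\rangle)=\langle J_\xi'(0)\rangle=\langle\xi\rangle$ and, conversely, $\bar\Phi(\Psi(\langle J\rangle))=\langle J\rangle$ by uniqueness together with \eqref{eq-sky-tangent}, so $\Psi=\bar\Phi^{-1}$ and $\bar\Phi$ is a linear isomorphism. If one prefers to avoid $\Psi$, injectivity follows because $\langle J_\xi\rangle=0$ forces $J_\xi=(at+b)\dot\gamma$, hence $b=0$ from $J_\xi(0)=0$ and then $\xi=J_\xi'(0)=au\in T_p\Gamma$, while surjectivity is exactly the content of \eqref{eq-sky-tangent}.

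The step I expect to be the main obstacle is the word \emph{canonical}, i.e.\ showing the isomorphism is built only from $\mathcal C$ and not from the auxiliary representative $g$. The domain causes no difficulty: $T_p\Gamma$ is the intrinsic tangent line of $\Gamma$ at $p$, and $T_p\Gamma^\perp$ is conformally invariant because orthogonality to a null direction is preserved under $g\mapsto e^{2\omega}g$, so $T_p\Gamma^\perp/T_p\Gamma$ is already metric-independent. The delicate point is that $J_\xi$ itself genuinely depends on $g$, since the curvature term in \eqref{eq:jacobi} and the connection used in $J_\xi'(0)=\xi$ are those of $g$, so one cannot compare $J_\xi$ for two representatives directly. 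The way I would settle this is to read the isomorphism through its inverse $\Psi$: the operator $\nabla^\gamma/dt$ was constructed in Section \ref{sec:conformal} precisely so as to be independent of the representative $g\in\mathcal C$ making $\gamma$ a null geodesic, and $T_\Gamma S(p)$ and $\mathcal H_\Gamma$ are intrinsic to $\mathcal N$ (Section \ref{sec:tangent}); hence $\Psi$, and therefore $\bar\Phi=\Psi^{-1}$, depend only on $\mathcal C$. I would finally remark that a residual dependence on the parametrisation survives, since the change-of-parameter formula for $\nabla^\gamma/dt$ shows that rescaling the parameter multiplies $\Psi$ by the corresponding positive factor, so that ``canonical'' is to be understood up to this positive scaling, exactly the freedom that a canonical conformal parametrisation of light rays is designed to eliminate; throughout, the dimension count $\dim(T_p\Gamma^\perp/T_p\Gamma)=m-2=\dim S(p)$ serves as a consistency check.
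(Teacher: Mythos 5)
Your proposal is correct and follows essentially the same route as the paper: the map $\xi\mapsto J_\xi$ with $J_\xi(0)=0$, $J_\xi'(0)=\xi$, descended to the quotient, with injectivity from the form $(at+b)\dot\gamma$ of the kernel and surjectivity from Eq.~(\ref{eq-sky-tangent}). You merely supply more detail than the paper does (the affine-in-$t$ check that $g(J_\xi,\dot\gamma)\equiv 0$, the explicit inverse via $\nabla^\gamma/dt$, and the honest caveat about the residual positive-scaling dependence on the parametrisation), all of which is sound.
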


We may use the maps $R_\gamma (t)$, Eq. (\ref{eq:Rgamma}), to define an endomorphism $C_{\gamma,p}$ of the sky $T_\Gamma S(p) \subset T_\Gamma \mathcal{N}$, as $C_{\gamma,p} \langle J \rangle = \langle R_\gamma (0) \rangle$, where $\langle J \rangle$ is a quotient Jacobi field along $\Gamma$. 
Note that because of the previous observations, $C_{\gamma,p}$ is a linear map $C_{\gamma,p} \colon \langle T_p\Gamma^\perp \rangle \to \langle T_p\Gamma^\perp \rangle$, given by:
\begin{equation}\label{eq:sky_curvature}
C_\gamma (p) \langle J \rangle  = \langle R(\dot{\gamma}, J) \dot{\gamma}(0) \rangle \, , \qquad \forall \langle J \rangle \in T_\Gamma S(p) \, ,
 \end{equation}
which is invariant under translations and satisfies:
$$
C_{\phi \circ \gamma, \phi (p)}  = C_{\gamma, p} \, ,
$$
for any conformal diffeomorphism $\phi$.  Let $\alpha \in \mathbb{P}\mathbb{N}^+_p$, and let $\Gamma_\alpha \in \mathcal{N}$, be the unique light ray such that $T_p \Gamma_\alpha = \alpha$, then, because of the previous Theorem, $\langle T_p \Gamma_\alpha^\perp \rangle \cong \alpha^\perp / \alpha \cong T_\Gamma S(p)$.    Hence the sky-curvature $C_\gamma$ assigns to each point $p \in \Gamma$ a linear map $C_\gamma (p) \colon \alpha^\perp/\alpha  \to \alpha^\perp/ \alpha$, i.e. $C_\gamma (p) \langle J \rangle  = \langle R(\dot{\gamma_\alpha}, J) \dot{\gamma}_\alpha (0) \rangle$.      

On the other hand, the coefficients $\kappa_k(t)$ of the characteristic polynomial $p_\gamma (s) = \det (R_\gamma (t) - sI)$, allows us to define a family of functions $\kappa_{\gamma,k}(t)$ on each light ray $\Gamma$, as the $k$th coefficient of the characteristic polynomial of  $R_{\gamma}(t)$, and $\gamma (t)$ a parametrisation of $\Gamma$.  We will call such functions parametric curvatures and they are invariant under translations of the parameter.   Again because of the conformal nature of the covariant derivative along $\Gamma$, it is satisfied that $\kappa_{\gamma,k} (t) =\kappa_{\phi\circ \gamma,k}(t)$ for any conformal diffeomorphism $\phi$.  Thus, the functions $\kappa_{\gamma,k} (t)$ define a family of parametric conformal invariants that will be called parametric sky curvatures.

In particular we can define two parametric curvatures: $\rho_\gamma (t) = \det R_\gamma(t) =  \kappa_{\gamma,m-2}(t)$, and $\delta_\gamma (t) = \mathrm{Tr} (R_{\gamma} (t)) = (-1)^{m-3}\kappa_{\gamma,1}(t)$, which are the only scalars associated to the linear map $R_\gamma(t)$ in four dimensions (note the if $m = 4$, the dimension of skies is 2). 

In principle, the parametric curvatures $\kappa_{\gamma,k}$ are not absolute conformal invariants, that is, they are not invariant under reparametrisations of $\Gamma$.   The reason for this is that if $t = \psi (\bar{t})$, we get (after a simple but subtle computation):
\begin{equation}\label{eq:curvature_change}
R_{\bar{\gamma}}(\bar{t}) \langle \bar{J} \rangle = \left( \frac{d\psi}{d\bar{t}} \right)^2 R_\gamma (\psi (\bar{t})) \langle J \rangle + \left\{ \frac{d^2\varphi}{d\bar{t}} + \left( \frac{d\varphi}{d\bar{t}}\right)^2 \right\} \langle J \rangle \, , 
\end{equation}
with $\bar{J}(\bar{t}) = J (\psi (t))$, and $\varphi$ given by Eq. (\ref{eq:change}), and they will not allow us to define directly a family of sky invariants $\kappa_k \colon  \mathbb{P}\mathbb{N}^+ \to \mathbb{R}$.  Moreover, computing them starting from an arbitrary parametrisation of a light ray $\Gamma$ is already hard as it involves determining a metric $g \in \mathcal{C}$ that makes them $g$-geodesic.   We will see in the following paragraphs, that both difficulties can be solved simultaneously by introducing a particular family of parametrisations that would allow us to turn the parametric conformal invariants $\kappa_{\gamma,k}$ into absolute ones and at the same time will provide an algorithmic way to compute them.


\subsection{Conformal parameter of a light ray}

We will denote by $\mathrm{Ric}_\gamma$ the trace of the curvature tensor $R_\gamma$ associated to the conformal covariant derivative $\nabla/dt$ along the light ray $\Gamma$ with parametrisation $\gamma (t)$.  Then, because of Eq. (\ref{eq:curvature_change}), we get:
\begin{equation}\label{eq:ric_change}
\mathrm{Ric}_{\bar{\gamma}} = \left(\frac{d\psi}{d\bar{t}}\right)^2 \mathrm{Ric}_\gamma + (m-2) \left\{ \frac{d^2\varphi}{d\bar{t}} + \left( \frac{d\varphi}{d\bar{t}}\right)^2 \right\} \, ,
\end{equation}
with $\varphi = - \frac{1}{2} \log d\psi /dt$, and $t = \psi (\bar{t})$ the change of parameters.   A simple computation shows that:
\begin{equation}\label{eq:schwarzian}
\frac{d^2\varphi}{d\bar{t}} + \left( \frac{d\varphi}{d\bar{t}}\right)^2 = - \frac{1}{2} \frac{\psi'''}{\psi'} + \frac{3}{4} \frac{(\psi'')^2}{(\psi')^2} = - \mathbb{S}(\psi) \, , 
\end{equation}
where the function $\mathbb{S}(\psi)$ is called the Schwartzian derivative of $\psi$.    It is well known that functions $\psi$ such that $\mathbb{S}(\psi) = 0$, must have the form:
\begin{equation}\label{eq:projective}
\psi (\bar{t}) = \frac{a \bar{t} + b}{c \bar{t} + d} \, , \qquad \det \left(\begin{array}{cc} a & b \\ c & d\end{array} \right) \neq 0 \, .
\end{equation}
We will say that a parametrisation $\gamma_P (t)$ of the light ray $\Gamma$ is projective if $\mathrm{Ric}_{\gamma_P} = 0$. In such case, we will say that $t$ is a projective parameter for $\Gamma$.    Observe that if $\bar{\psi} = \psi^{-1}$, then,
$$
\mathbb{S}(\bar{\psi})\mid_t = - \left( \frac{d\bar{\psi}}{dt}\right)^2 \mathbb{S}(\psi)\mid_{\bar{\psi}(t)} \, .
$$
If $\bar{t} = \psi (t)$ is a projective parameter, then $\mathrm{Ric}_{\bar{\gamma}} = 0$, and, because of Eqs. (\ref{eq:ric_change},\ref{eq:schwarzian}), we get:
\begin{equation}\label{eq:SRic}
\mathbb{S}(\bar{\psi}) = \frac{1}{2- m} \mathrm{Ric}_\gamma \, ,
\end{equation}
then, the solutions of the third order differential equation (\ref{eq:SRic}) permit to obtain the projective parameter $\bar{t}$ from any given parameter $t$, and, in such case Eq. (\ref{eq:curvature_change}) becomes:
\begin{equation}\label{eq:R_conformal}
R_{\bar{\gamma}}(\bar{t}) = \left( \frac{d\psi}{d\bar{t}} \right)^2 \left[ R_\gamma (\psi (\bar{t})) -   \frac{1}{m - 2} \mathrm{Ric}_\gamma(\psi (\bar{t})) \mathrm{Id}  \right] \, .
\end{equation}
Note that if $\gamma_P(t)$ and $\gamma_P(t')$ are two projective parametrisations of $\Gamma$, then because of (\ref{eq:projective}), $t' = (at + b)/(ct + d)$, $ad-bc > 0$.

We can define the length of a segment $\Gamma_0$ of the light ray $\Gamma$ as follows.  Let $\gamma_P(t)$ be a projective parametrisation of $\Gamma$ and $\Gamma_0 = \gamma_P([a,b])$ be a segment of $\Gamma$, then define:
\begin{equation}\label{eq:conformal_par}
L(\Gamma_0) = \int_a^b \zeta_{\gamma_P}(t) dt \, , \qquad \zeta_{\gamma_P}(t) = \sqrt[2(m-2)]{|\det R_{\gamma_P(t)}|} \, .
\end{equation}
The previous definition is independent of the chosen projective parametrisation.   Indeed, if $\bar{t}$, ($t = \psi (\bar{t})$) is another projective parameter, then $\mathbb{S}(\psi) = 0$, because of (\ref{eq:ric_change}), we get:
$$
R_{\bar{\gamma}_P} =  \left(\frac{d\psi}{d\bar{t}}\right)^2 R_\gamma \, ,
$$
and we conclude:
$$
\zeta_{\bar{\gamma}_P} = \sqrt[2(m-2)]{|\det R_{\bar{\gamma}_P(\bar{t})}|} = \sqrt[2(m-2)]{ \left(\frac{d\psi}{d\bar{t}}\right)^{2(m-2)}|\det R_{\gamma_P(t)}|} = \frac{d\psi}{d\bar{t}} \zeta_{\gamma_P} (\psi (\bar{t})) \, .
$$
Then, the change of variables formula gives us:
$$
\int_a^b \zeta_{\gamma_P}(t) dt = \int_{\bar{a}}^{\bar{b}} \zeta_{\bar{\gamma}_P} (\bar{t}) d\bar{t} \, ,
$$
and the length $L(\Gamma_0)$ does not depend on the chosen projective parametrisation.  Moreover, the length $L$ thus defined is a conformal invariant, that is if $\phi \colon (M, \mathcal{C}) \to (M', \mathcal{C}')$ is a conformal diffeormophism, it transforms light rays into light rays and the length $L(\Gamma_0)$ of the segment of the light ray $\Gamma \subset M$, coincides with the length $L(\phi(\Gamma_0))$.   

We may also say that there is a 1-form $ds_\Gamma$ defined on the light ray $\Gamma$, that can be written as $ds_\Gamma (t) = \zeta_{\gamma_P}(t) dt$ with respect to any projective parametrisation, such that:
$$
L(\Gamma_0) = \int_{\Gamma_0} ds_\Gamma \, .
$$
Notice that the vanishing of $R_\gamma$ implies the vanishing of $ds_\Gamma$, hence, for conformally flat spaces, $ds_\Gamma = 0$.

In the particular instance $\dim M = 4$, if $\gamma_P(t)$ is a projective parametrisation of $\Gamma$, then $\mathrm{Ric}_{\gamma_P} = 0$, and provided that $R_{\gamma_P}(t_0) \neq 0$, then necessarily $|\det R_{\gamma_P}(t_0)| > 0$.   Then in a neighborhood $I$ of $t_0$, $t \in I$, $\zeta_{\gamma_P}(t) > 0$, and defining:
$$
s = \psi^{-1}(t) = \int_{t_0}^t \zeta_{\gamma_P}(\tau) d \tau \, ,
$$
then, $ds/dt = \zeta_{\gamma_P}(t) > 0$, and $t = \psi (s)$ is the parametrisation by conformal arc of $\Gamma$, that is $\gamma_C(s) =\gamma_P(\psi (s))$.   Notice that the conformal parameter $s$ is defined up to translations.

We will conclude this section providing a general expression for the differential of conformal arc $ds_\Gamma$.   That is, let $\gamma(t)$ be an arbitrary parametrisation of $\Gamma$, then because of Eq. (\ref{eq:R_conformal}):
\begin{equation}\label{eq:diff_conformal}
ds_\Gamma = \sqrt[2(m-2)] {\left| \det\left( R_\gamma (t) - \frac{1}{m-2} \mathrm{Ric}_\gamma (t) \cdot \mathrm{Id} \right)  \right| } dt \, .
\end{equation}


\subsection{The sky-curvature and the absolute sky curvatures $\rho$ and $\delta$}

Let $p \in M$  be such that $ds_\Gamma \neq 0$ for all $\Gamma \in S(p)$ (obviously, if $ds_\Gamma = 0$ the sky curvature cannot be defined).  Then, let $\gamma_C(s)$ be the parametrisation of $\Gamma$ by the conformal arc parameter $s$ with $\gamma_C(0) = p$.   

Let us consider again the parametric conformal curvatures $\kappa_{\gamma,k}$ defined in Sect. \ref{sec:parametric}.   We associate to them the corresponding absolute curvatures $\kappa_{\Gamma,k} \colon \Gamma \to \mathbb{R}$, defined as:
$$
\kappa_{\Gamma,k} (\gamma_C(s)) = \kappa_{\gamma_C,k}(s) \, ,
$$
where $\gamma_C(s)$ is the parametrisation of $\Gamma$ by the conformal parameter $s$ (which is defined provided that $\zeta_{\gamma_P}(t) \neq 0$ on $\Gamma$).    The absolute curvatures $\kappa_{\Gamma,k}$ can be written as maps $\kappa_k \colon \mathbb{P}\mathbb{N}^+ \to \mathbb{R}$, as discussed in Sect. \ref{sec:parametric}.   

We conclude then, that the linear map $C_\Gamma (p) = R_{\gamma_C}(0) \colon T_\Gamma S(p) \to T_\Gamma S(p)$, is a sky conformal invariant that will be called the sky-curvature.

In the particular instance of $\dim M = 4$, skies are two-dimensional and the sky-curvature $C_\Gamma$, assigns to any $\alpha \in  \mathbb{P}\mathbb{N}^+$, a linear map from the 2-dimensional space $T_{\Gamma_a}S(p)$ into itself, where $\pi (\alpha ) =p$ and $\Gamma_\alpha$ is, as usual, the only light ray such that $p \in \Gamma_\alpha$ and $T_p\Gamma_\alpha = \alpha$. Then, there are just two absolute conformal curvatures $\rho (\alpha) = \mathrm{Tr\,} C_{\Gamma_\alpha}(0)$, and $\delta (\alpha) = \det C_{\Gamma_\alpha (0)}$.

We will finish the discussion of the properties of the sky-curvature $C$ by providing an explicit expression for it using Eqs. (\ref{eq:ric_change}) and (\ref{eq:conformal_par}) that will be extremely useful for its computation (see below, Sect. \ref{sec:algorithm}):

\begin{theorem} Let $g \in \mathcal{C}$, $\alpha \in \mathbb{P}\mathbb{N}^+_p$, and $\gamma = \gamma_\alpha$, be a null $g$-geodesic such that $\gamma_\alpha (0) = p$, $\dot{\gamma}_\alpha (0) \in \alpha$. Then the sky-curvature $C_{\Gamma_\alpha}$ as an endormorphism of $\alpha^\perp/\alpha$, is given by:
\begin{equation}\label{eq:Cp}
C_{\Gamma_\alpha} = \frac{1}{\zeta_\gamma^2} \left[ R_{\gamma}(0) + \frac{1}{2} \left( \frac{\zeta''_{\gamma}}{\zeta_{\gamma}} - \frac{3}{2} \left(\frac{\zeta_\gamma'}{\zeta_\gamma} \right)^2 \right) \mathrm{Id} \right] \, ,
\end{equation}
with $\zeta_\gamma =  \sqrt[2(m-2)] {\left| \det\left( R_\gamma (t) - \frac{1}{m-2} \mathrm{Ric}_\gamma (t) \, \mathrm{Id} \right)  \right|}$, or, alternatively, using the function $D_{\alpha}(t) = \det \left( R_{\gamma}(t) - \frac{1}{m-2} \mathrm{Ric\,}_{\gamma}(t)  \, \mathrm{Id} \right)$, we can rewrite Eq. (\ref{eq:Cp}) as: 
\begin{equation}\label{eq:CD}
C_{\Gamma_\alpha} = \frac{1}{\sqrt[m-2]{| D_\alpha |}} \left[ R_{\gamma}(0) + \frac{1}{4(m- 2)} \left( \frac{D''_{\alpha}}{D_{\alpha}} + \frac{7-4m}{4(m-2)} \left(\frac{D_\alpha'}{D_\alpha} \right)^2 \right) \mathrm{Id} \right] \, .
\end{equation}
\end{theorem}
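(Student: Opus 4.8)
The plan is to start from the definition of the sky-curvature, $C_{\Gamma_\alpha} = R_{\gamma_C}(0)$, where $\gamma_C(s)$ is the parametrisation of $\Gamma_\alpha$ by conformal arc with $\gamma_C(0) = p$, and to express it through the given null $g$-geodesic $\gamma = \gamma_\alpha$ by treating $\gamma_C$ as a reparametrisation of $\gamma$. The key observation is that, by Eq. (\ref{eq:diff_conformal}) (equivalently Eq. (\ref{eq:conformal_par})), the conformal arc element is intrinsic and already reads $ds_\Gamma = \zeta_\gamma(t)\,dt$ in the given parameter, with $\zeta_\gamma = \sqrt[2(m-2)]{|\det(R_\gamma - \frac{1}{m-2}\mathrm{Ric}_\gamma\,\mathrm{Id})|}$; there is therefore no need to pass through a projective parametrisation. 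I would set $s = \phi(t) = \int_0^t \zeta_\gamma(\tau)\,d\tau$ and $t = \psi(s) = \phi^{-1}(s)$, so that $\psi'(s) = 1/\zeta_\gamma$ and $\gamma_C = \gamma \circ \psi$.

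First I would apply the master reparametrisation formula Eq. (\ref{eq:curvature_change}), whose trace is Eq. (\ref{eq:ric_change}); combined with the Schwarzian identity Eq. (\ref{eq:schwarzian}) it reads $R_{\gamma_C}(s) = (\psi'(s))^2 R_\gamma(\psi(s)) - \mathbb{S}(\psi)(s)\,\mathrm{Id}$. To evaluate $\mathbb{S}(\psi)$ I would use the inverse-function identity for the Schwarzian recorded in the excerpt, which gives $\mathbb{S}(\psi)(s) = -(\psi'(s))^2\,\mathbb{S}(\phi)(\psi(s)) = -\zeta_\gamma^{-2}\,\mathbb{S}(\phi)$, since $\psi = \phi^{-1}$. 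A direct computation with $\phi' = \zeta_\gamma$ yields $\mathbb{S}(\phi) = \frac{1}{2}\,\zeta_\gamma''/\zeta_\gamma - \frac{3}{4}(\zeta_\gamma'/\zeta_\gamma)^2 = \frac{1}{2}(\zeta_\gamma''/\zeta_\gamma - \frac{3}{2}(\zeta_\gamma'/\zeta_\gamma)^2)$. Substituting both expressions and evaluating at $s = 0$ (that is $t = 0$) makes the prefactor $(\psi')^2$ and the term $-\mathbb{S}(\psi)$ collapse into the common factor $\zeta_\gamma^{-2}$, producing exactly Eq. (\ref{eq:Cp}).

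To reach the equivalent form Eq. (\ref{eq:CD}) I would substitute $\zeta_\gamma = |D_\alpha|^{1/(2(m-2))}$ and pass to logarithmic derivatives. Writing $n = 2(m-2)$ and $\log \zeta_\gamma = \frac{1}{n}\log|D_\alpha|$, one has $\zeta_\gamma'/\zeta_\gamma = \frac{1}{n}\,D_\alpha'/D_\alpha$ and $\zeta_\gamma''/\zeta_\gamma = \frac{1}{n}(D_\alpha''/D_\alpha - (D_\alpha'/D_\alpha)^2) + \frac{1}{n^2}(D_\alpha'/D_\alpha)^2$. Inserting these into the bracket of Eq. (\ref{eq:Cp}), the coefficient of $D_\alpha''/D_\alpha$ reduces to $\frac{1}{2n} = \frac{1}{4(m-2)}$, while the coefficient of $(D_\alpha'/D_\alpha)^2$ collects to $-\frac{1}{4n^2} - \frac{1}{2n} = \frac{7-4m}{16(m-2)^2}$; together with $\zeta_\gamma^{-2} = |D_\alpha|^{-1/(m-2)}$ this is precisely Eq. (\ref{eq:CD}).

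I expect the substance of the argument to be bookkeeping, so the two places I would watch most carefully are the following. The genuinely subtle point is the legitimacy of reparametrising \emph{directly} from the $g$-geodesic $\gamma$ to the conformal arc $\gamma_C$: this is justified precisely because $ds_\Gamma$ is a parametrisation-independent $1$-form, so Eq. (\ref{eq:diff_conformal}) supplies $\phi' = \zeta_\gamma$ in the arbitrary parameter, and Eq. (\ref{eq:curvature_change}) holds for every regular reparametrisation. The second delicate point is the correct use of the inverse-function Schwarzian identity, including its sign, since an error there would propagate into the constant term of Eq. (\ref{eq:Cp}); the final reduction to Eq. (\ref{eq:CD}) is routine but error-prone, and I would recompute the $(D_\alpha'/D_\alpha)^2$ coefficient independently as a check.
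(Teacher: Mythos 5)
Your proposal is correct and is essentially the computation that the paper's one-sentence proof leaves implicit: reparametrise to the conformal arc via Eq. (\ref{eq:curvature_change}), recognise the inhomogeneous term as the Schwarzian $-\mathbb{S}(\psi)$ via Eq. (\ref{eq:schwarzian}), evaluate it with the inverse-function identity and $\phi'=\zeta_\gamma$, and the algebra leading to both (\ref{eq:Cp}) and (\ref{eq:CD}) (including the coefficient $\frac{7-4m}{16(m-2)^2}$ of $(D_\alpha'/D_\alpha)^2$) checks out. Your only deviation --- reparametrising in one step from the $g$-geodesic parameter rather than passing through a projective parameter as the paper's chain of formulas (\ref{eq:R_conformal}), (\ref{eq:conformal_par}) suggests --- is harmless, since the two routes agree by the cocycle property of the reparametrisation formula (equivalently of the Schwarzian), which is exactly why the $\mathrm{Ric}_\gamma$ term appears only through $\zeta_\gamma$ in the final expression.
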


\begin{proof} Formula (\ref{eq:Cp}) is a straightforward consequence of the definition of the sky-curvature  (\ref{eq:sky_curvature}) and the expression for the conformal parameter (\ref{eq:R_conformal}). 
\end{proof}

We will finish this preliminary study of the absolute conformal invariants associated to the sky-invariant by observing that in the 4-dimensional situation, the conformal curvatures $\rho$ and $\delta$ are functionally dependent.  Indeed, from the expression of the differential of the conformal parameter given by Eq. (\ref{eq:diff_conformal}), we get for $t$ a conformal parameter:
$$
\det \left( \delta (t) - \frac{1}{m-2} \rho(t) \right) = \epsilon = \pm 1 \, .
$$
Again in the case $m = 4$, we get, denoting $R_\Gamma (t) = (R_{ij})$, $i,j = 1,2$:
\begin{equation}\label{eq:epsilon}
\epsilon = \delta - (R_{11} + R_{22}) r = \delta - \frac{1}{4} \rho \, .
\end{equation}
Notice that the previous expression (\ref{eq:epsilon}) shows that the sign $\epsilon$ is a conformal invariant too.   The sky curvature matrix $R_\Gamma$ has characteristic polynomial $P(\lambda) = \lambda^2 - \rho \lambda + \delta$ with eigenvalues: $(\rho \pm \sqrt{-4\epsilon})/2$, hence if $\epsilon = -1$, there are two real eigenvalues:  $\rho/2 \pm 1$, that correspond to two different eigenvectors of the sky curvature and $\Gamma \in S(x_0)$ would be hyperbolic.   On the other hand if $\epsilon = + 1$, then the sky curvature has imaginary eigenvalues and $\Gamma$ would be elliptic.


\section{An algorithm to compute the sky curvature}\label{sec:algorithm}

Computing the sky-curvature of a given spacetime is a demanding problem.  In this section we will discuss an algorithm to do it that can be implemented on a symbolic manipulation software\footnote{In our case, it was implemented in Mapple\copyright .}.    In the following section, the algorithm is applied to the Schwarzchild spacetime.

\medskip

\textbf{Setting the problem:}

\begin{enumerate}

\item Given a spacetime of dimension 4 with conformal structure $(M,\mathcal{C})$, select a point $p \in M$.

\item Fix a metric $g\in \mathcal{C}$, and a $g$-orthonormal frame: $\boldsymbol{\varepsilon}=\left( \varepsilon_0, \varepsilon_1, \varepsilon_2, \varepsilon_3 \right)$, $\varepsilon_\mu \in T_pM$, $g(\varepsilon_0,\varepsilon_0) = -1$, $g(\varepsilon_i,\varepsilon_i) = 1$, $i=1,2,3$. Then we get:
$$
\mathbb{P}\mathbb{N}_p^+ = \left\{ \left[\boldsymbol{\varepsilon} \cdot \left( \begin{array}{c} 1 \\ \boldsymbol{\alpha} \end{array}\right) \right] \mid \boldsymbol{\alpha} = (\alpha_1,\alpha_2,\alpha_3)^T, \alpha_1^2 + \alpha_2^2 + \alpha_3^2 = 1 \right\} \cong S^2
$$
\item Define a coordinate system $(A,B)$ in $S(p) \cong \mathbb{P}\mathbb{N}^+$, $0\leq A < \pi$, $0 \leq B < 2\pi$, as:
\begin{equation}\label{eq:alphaAB}
\alpha = \alpha (A,B) =  \left[\boldsymbol{\varepsilon} \cdot \left( \begin{array}{c} 1 \\ \boldsymbol{\alpha}(A,B) \end{array} \right) \right] \, ,
\end{equation}
with 
\begin{equation}\label{eq:alpha}
\boldsymbol{\alpha}(A,B) = \left( \begin{array}{c} \sin A \cos B \\ \sin A \sin B \\ \cos A \end{array}\right)\, .
\end{equation}
\item The problem is to compute the matrix 
\begin{equation}\label{eq:C}
C = \left( \begin{array}{cc} C_A^A & C_A^B \\ C_B^A & C_B^B \end{array}\right) \, ,
\end{equation} 
of the components of the sky-curvature $C_p$ in $ \mathbb{P}\mathbb{N}^+$ in the coordinates $(A,B)$.  From $C$ we get the conformal invariants $\delta = \det C$, and $\rho = \mathrm{Tr\, } C$.

\end{enumerate}

\bigskip

\textbf{The algorithm}

\begin{enumerate}

\item For each $\alpha = \alpha (A,B)$ (\ref{eq:alphaAB}), construct the orthogonal basis in $T_pM$:
\begin{equation}\label{eq:epsilon2}
\boldsymbol{\varepsilon} (\alpha) = \boldsymbol{\varepsilon}\cdot \left( \begin{array}{cccc} 1 & 0 & 0 & 0 \\ 0 & \boldsymbol{\alpha} & \partial  \boldsymbol{\alpha} /\partial A & \partial  \boldsymbol{\alpha} /\partial B \end{array}\right) \, ,
\end{equation}
and write $\boldsymbol{\varepsilon} (\alpha ) = ( \mathbf{e}_0(\alpha), \mathbf{e}_1(\alpha),  \mathbf{e}_2(\alpha),  \mathbf{e}_3(\alpha))$.

\item  Consider linear coordinates $\left(\begin{array}{c} \lambda \\ \mu \end{array} \right)$ in $T_{\Gamma_\alpha} S(p) = \alpha^\perp /\alpha$, of the form:
\begin{equation}\label{eq:lambdamu}
\left(\begin{array}{c} \lambda \\ \mu \end{array} \right) \mapsto ( \mathbf{e}_2(\alpha),  \mathbf{e}_3(\alpha)) \left(\begin{array}{c} \lambda \\ \mu \end{array} \right) \quad \mathrm{mod\, } \alpha \, .
\end{equation}

\item Compute the functions $\mathbf{R}_{abc}^d (\alpha)$ such that:
$$
R( \mathbf{e}_a, \mathbf{e}_b)  \mathbf{e}_c = \mathbf{R}_{abc}^d  \mathbf{e}_d \, ,
$$
with $R$ the Riemann curvature tensor of $g$.
\item Let $\gamma_a(t)$ be the unique $g$-geodesic such that:
$$
\dot{\gamma}(0) =  \mathbf{e}_0 (\alpha) +  \mathbf{e}_1(\alpha) \, .
$$
Then, the matrix $R_{\gamma_a}(0)$ as an endomorphism of $\alpha^\perp /\alpha$ with respect to the basis $(\mathbf{e}_2,\mathbf{e}_3) \, \mathrm{mod \, } \alpha$, will have associated a matrix of the form:
$$
R = \left(\begin{array}{cc} R_2^2 & R_3^2 \\ R_2^3 & R_3^3 \end{array} \right) \, ,
$$
but because $R_{\gamma_a}(0) ( \mathbf{e}_2 ) = R(\dot{\gamma}_a(0),  \mathbf{e}_2) \dot{\gamma}_a(0) = R( \mathbf{e}_0 +  \mathbf{e}_1, \mathbf{e}_2)( \mathbf{e}_0 +  \mathbf{e}_1)$, we get:
$$
R_b^a = \mathbf{R}_{0b0}^a + \mathbf{R}_{1b0}^a + \mathbf{R}_{0b1}^a + \mathbf{R}_{1b1}^a \, .
$$ 

\item Compute $D_\alpha$ and $\zeta_\alpha$:
$$
D_\alpha = \det \left( R_\alpha - \frac{1}{2} \mathrm{Tr\,} (R_\alpha) \cdot \mathrm{Id} \right) = - \frac{1}{4}(R_2^2 - R_3^3)^2 - R_3^2R_2 ^3 \, ; \qquad \zeta_\alpha = \sqrt[4]{|D_\alpha |} \, .
$$

\item Compute $D_\alpha'$:
Compute the derivatives: $dR_b^a /dt$ using the covariant derivative $\nabla R$, that is:
$$
(R')_b^a := \frac{dR_a^b}{dt} = \frac{d\mathbf{R}_{0b0}^a}{dt} + \frac{d\mathbf{R}_{1b0}^a}{dt} + \frac{d\mathbf{R}_{0b1}^a}{dt} + \frac{d\mathbf{R}_{1b1}^a}{dt} \, , 
$$
and
$$
\frac{d\mathbf{R}_{abc}^d}{dt} = (\nabla R)_{abc0}^d + (\nabla R)_{abc1}^d \, .
$$
Then:
$$
D'_\alpha = - \frac{1}{2}\left( (R')_2^2 - (R')_3^3 \right) (R_2^2 - R_3^3) - (R')_3^2 R_2^3 - R_3^2 (R')_2^3  \, .
$$
The expressions for $D'_\alpha$ and $D''_\alpha$ can be given in compact form as follows.  Let $Q$ denote the matrix $Q = R_\alpha - \frac{1}{2} \mathrm{Tr\,}(R_\alpha) \cdot \mathrm{Id}$, then ($|Q| = \det Q$):
\begin{equation}\label{eq:D'}
D_\alpha' = |Q| \mathrm{Tr\,}(Q^{-1} Q') \, ,
\end{equation}
and
\begin{equation}\label{eq:D''}
D_\alpha'' = |Q| \left[ (\mathrm{Tr\,} (Q^{-1} Q'))^2 - \mathrm{Tr\,} ((Q^{-1}Q')^2) + \mathrm{Tr}(Q^{-1} Q'') \right] \, .
\end{equation}

\item Compute $D_\alpha''$:  Using Eq. (\ref{eq:D''}) we get:
\begin{eqnarray*}
D_\alpha'' &=& - \frac{1}{2} \left[ ((R')_2^2)^2 + R_2^2 (R'')_2^2 - 2(R')_2^2 (R')_3^3 - R_2^2(R'')_3^3 - R_3^3 (R'')_2^2 + ((R')_3^3)^2 + R_3^3 (R'')_3^3 \right] \\ & & - (R'')_3^2 R_2^3 - 2 (R')_3^2 (R')_2^3 - R_3^2 (R'')_2^3 \, .
\end{eqnarray*}

\item Using Eq. (\ref{eq:Cp}) we get:
$$
C =  \left( \begin{array}{cc} C_A^A & C_A^B \\ C_B^A & C_B^B \end{array}\right) = \frac{1}{\sqrt{D_\alpha}}  \left( \begin{array}{cc} R_2^2 & R_3^2 \\ R_2^3 & R_3^3 \end{array}\right) - \Phi I_2 \, ,
$$
with 
$$
\Phi = \frac{1}{8 \sqrt{|D_\alpha|}} \left[ \frac{9}{8} \left(\frac{D_\alpha'}{D_\alpha} \right)^2 - \frac{D_\alpha''}{D_\alpha}\right] \, .
$$
\end{enumerate}



\section{Schwarzschild's spacetime}\label{sec:schwarzschild}

We will illustrate the previous ideas considering Schwarzschild's spacetime (see, for instance, \cite{Ha73,Mo11}), that is, the four dimensional manifold $M$ equipped with the metric $g$ defined by:
\[
ds^2 = -\left(1-\frac{2m}{r}\right)dt^2 + \left(1-\frac{2m}{r}\right)^{-1}dr^2 + r^2 \left( d\phi^2 + \sin^2 \phi ~d\theta^2 \right)
\]
where $t\in \mathbb{R}$, $r\in \left(0,\infty\right)$, $\phi\in \left[0,2\pi\right)$, and $\theta\in \left[0,\pi\right)$.   In the indicated coordinates, the metric is singular at $r = 0$ and $r = 2m$.  We will call the connected manifold $r > 2m$, the exterior Schwarzschild spacetime, and the connected manifold $0 < r < 2m$, the interior Schwarzschild spacetime.

The non-vanishing components $\mathbf{R}^\sigma_{\mu\nu\rho}$  of the curvature tensor $R$, with $R\left(\frac{\partial}{\partial x^\nu},\frac{\partial}{\partial x^\rho}\right)\frac{\partial}{\partial x^\mu}= \mathbf{R}^\sigma_{\mu\nu\rho}\frac{\partial}{\partial x^\sigma}$, are
\[
\begin{array}{lll}
\mathbf{R}^t_{rtr} = \frac{2m}{r^2(r-2m)} & \mathbf{R}^t_{\phi t \phi} = -\frac{m}{r} & \mathbf{R}^t_{\theta t \theta} = -\frac{m \sin^2\phi}{r} \\
& & \\
\mathbf{R}^r_{trt} = -\frac{2m(r-2m)}{r^4} & \mathbf{R}^r_{\phi r \phi} = -\frac{m}{r} & \mathbf{R}^r_{\theta r \theta} = -\frac{m \sin^2\phi}{r} \\
& & \\
\mathbf{R}^{\phi}_{t \phi t} = \frac{m(r-2m)}{r^4} & \mathbf{R}^{\phi}_{r \phi r} = -\frac{m}{r^2(r-2m)} & \mathbf{R}^{\phi}_{\theta \phi \theta} = \frac{2m \sin^2\phi}{r} \\
& & \\
\mathbf{R}^{\theta}_{t \theta t} = \frac{m(r-2m)}{r^4} & \mathbf{R}^{\theta}_{r \theta r} = -\frac{m}{r^2(r-2m)} & \mathbf{R}^{\theta}_{\phi \theta \phi} = \frac{2m}{r} 
\end{array}
\]
where the obvious identification of the subindexes $\mu$ with the symbols denoting the coordinates $t,r,\theta, \phi$, has been used.


\subsection{The exterior Schwarzschild spacetime}\label{sec:null-Schwartzschild}

We will assume that $r>2m$ and then $\frac{r-2m}{r}>0$.  A lightlike vector $\alpha =\alpha^\mu \frac{\partial}{\partial x^\mu}$ in Schwarzschild spacetime, with $x^0=t$, $x^1=r$, $x^2=\phi$ and $x^3=\theta$, must verify 
\[
 -\left(\frac{r-2m}{r}\right)~(\alpha^0)^2 + \left(\frac{r}{r-2m}\right)~(\alpha^1)^2 + r^2 ~(\alpha^2)^2 + r^2\sin^2 \phi ~(\alpha^3)^2 =0
\] 
then the null direcctions are caracterized by $\alpha^1,\alpha^2,\alpha^3$ when $\alpha^0$ is fixed. 
Let us fix $\alpha^0=1$, like in Eq. (\ref{eq:alphaAB}), then a null direction is defined by:
\begin{equation}\label{eq-beta}
\left\{
\begin{array}{l}
\alpha^0 = 1  \\
\\
\alpha^1 =\frac{r-2m}{r} \cos B \sin A \\
\\
\alpha^2 = \frac{1}{r}\sqrt{\frac{r-2m}{r}} \sin B \sin A \\
\\
\alpha^3 = \frac{1}{r\sin \phi}\sqrt{\frac{r-2m}{r}} \cos A \, ,
\end{array}
\right.
\end{equation} 
with $A,B$ denoting polar angles, $0\leq A < \pi$, $0\leq B < 2\pi$.

Consider the orthonormal basis $\boldsymbol{\varepsilon}=\left( \varepsilon_0, \varepsilon_1, \varepsilon_2, \varepsilon_3 \right)$ given by:
\[
\varepsilon_0=\sqrt{\frac{r}{r-2m}}\frac{\partial}{\partial t}, \quad \varepsilon_1=\sqrt{\frac{r-2m}{r}}\frac{\partial}{\partial r}, \quad \varepsilon_2=\frac{1}{r}\frac{\partial}{\partial \phi}, \quad \varepsilon_3=\frac{1}{r\sin \phi}\frac{\partial}{\partial \theta}
\]
then, at any $p=(t,r,\phi,\theta)$ the map given by $(A,B)\mapsto \alpha(A,B)$:
\[
\alpha(A,B)= \varepsilon_0 + \cos B \sin A \cdot\varepsilon_1 + \sin B \sin A \cdot\varepsilon_2 +\cos A \cdot\varepsilon_3 \in \mathbb{PN}_p = S(p)
\]
is a parametrisation of the sky $S(p)$. 

If we denote by $\boldsymbol{\alpha}$ the column vector in $\mathbb{R}^3$ with components $\cos B \sin A$, $\sin B \sin A$, $\cos A$, Eq. (\ref{eq:alpha}), then we get:
\begin{equation}\label{eq:partialalpha}
\frac{\partial\boldsymbol{\alpha}}{\partial A}=\begin{pmatrix}
\cos B \cos A \\
\sin B \cos A \\
-\sin A
\end{pmatrix}, \quad 
\frac{\partial\boldsymbol{\alpha}}{\partial B}=\begin{pmatrix}
-\sin B \sin A \\
\cos B \sin A \\
0
\end{pmatrix}
\end{equation}
and we obtain (recall Eq. (\ref{eq:epsilon2})):
\[
\left( \mathbf{e}_0, \mathbf{e}_1, \mathbf{e}_2, \mathbf{e}_3 \right)= \boldsymbol{\varepsilon} \cdot \begin{pmatrix}
1 & 0 & 0 & 0 \\
\mathbf{0} & \boldsymbol{\alpha} & \frac{\partial\boldsymbol{\alpha}}{\partial A} & \frac{1}{\sin A}\frac{\partial\boldsymbol{\alpha}}{\partial B}
\end{pmatrix}
\]
is an orthonormal basis in $T_p M$.

Let us denote by $\gamma_{\alpha}$ the null geodesic such that $\gamma'_{\alpha}(0)=\alpha(A,B)=\mathbf{e}_0 + \mathbf{e}_1$ then, since the basis $ \left( \mathbf{e}_\mu \right)_{\mu=0,1,2,3}$ is orthonormal, we can consider $\left( \mathbf{e}_2, \mathbf{e}_3 \right)$ as a basis of $\langle \gamma_{\alpha}(0)^{\perp} \rangle$.

\subsubsection{The parametric curvature}

Consider $J = \mu \mathbf{e}_2 + \lambda \mathbf{e}_3 \in  \langle \gamma_{\alpha}(0)^{\perp} \rangle$ then, in order to calculate the parametric curvature, we will have to compute 
\[
R\left(\mathbf{e}_0 + \mathbf{e}_1,J\right)(\mathbf{e}_0 + \mathbf{e}_1) = \mu R\left(\mathbf{e}_0 + \mathbf{e}_1,\mathbf{e}_2\right)(\mathbf{e}_0 + \mathbf{e}_1)  + \lambda R\left(\mathbf{e}_0 + \mathbf{e}_1,\mathbf{e}_3\right)(\mathbf{e}_0 + \mathbf{e}_1)\in \{\gamma'(s)^{\perp}\}
\],

First, observe that for $i=2,3$
\begin{align*}
R\left(\mathbf{e}_0 + \mathbf{e}_1,\mathbf{e}_i\right)(\mathbf{e}_0 + \mathbf{e}_1) & = R\left(\mathbf{e}_0 ,\mathbf{e}_i\right)\mathbf{e}_0 + R\left(\mathbf{e}_0 ,\mathbf{e}_i\right)\mathbf{e}_1 + R\left(\mathbf{e}_1,\mathbf{e}_i\right)\mathbf{e}_0 + R\left(\mathbf{e}_1,\mathbf{e}_i\right)\mathbf{e}_1 = \\
& = \left(P^n_{00i} + P^n_{10i} +P^n_{01i} + P^n_{11i}  \right)\mathbf{e}_n
\end{align*}
where $P^n_{kij}$ denotes de components of the Riemann curvature related to the basis $(\mathbf{e}_i)_{i=1,\ldots , 4}$, that is:
\[
R\left(\mathbf{e}_i,\mathbf{e}_j\right)\mathbf{e}_k = P^n_{kij}\mathbf{e}_n
\]

So we have:
\begin{align*}
R\left(\mathbf{e}_0 + \mathbf{e}_1,\mathbf{e}_2\right)(\mathbf{e}_0 + \mathbf{e}_1)&= \frac{-3m}{r^3} \sin A ~\cos A ~\cos^2 B \cdot\left(\mathbf{e}_0 + \mathbf{e}_1 \right) + \\
& + \frac{3m}{r^3}\left( -1 + \sin^2 A ~\cos^2 B + 2\sin^2 B \right)\cdot\mathbf{e}_2 + \frac{6m}{r^3}\left( \cos A ~\cos B ~\sin B \right) \cdot\mathbf{e}_3 \\
\\
R\left(\mathbf{e}_0 + \mathbf{e}_1,\mathbf{e}_3\right)(\mathbf{e}_0 + \mathbf{e}_1)&= \frac{3m}{r^3} \sin A ~\cos B ~\sin B \cdot\left(\mathbf{e}_0 + \mathbf{e}_1 \right) + \\
& + \frac{6m}{r^3}\left(\cos A ~\cos B ~\sin B \right)\cdot\mathbf{e}_2 + \frac{3m}{r^3}\left(  1 - \sin^2 A ~\cos^2 B -2\sin^2 B  \right) \cdot\mathbf{e}_3
\end{align*}

Then, subtracting the component in the direction $\gamma'_{\alpha}(0)=\mathbf{e}_0 + \mathbf{e}_1$ we obtain 
\begin{align*}
\langle R\left(\mathbf{e}_0 + \mathbf{e}_1,\mathbf{e}_2\right)(\mathbf{e}_0 + \mathbf{e}_1) \rangle &= \frac{3m}{r^3}\left( -1 + \sin^2 A ~\cos^2 B + 2\sin^2 B\right)\cdot\mathbf{e}_2 + \frac{6m}{r^3}\left( \cos A ~\cos B ~\sin B \right) \cdot\mathbf{e}_3 \\
\\
\langle R\left(\mathbf{e}_0 + \mathbf{e}_1,\mathbf{e}_3\right)(\mathbf{e}_0 + \mathbf{e}_1) \rangle &=   \frac{6m}{r^3}\left(\cos A ~\cos B ~\sin B\right)\cdot\mathbf{e}_2 + \frac{3m}{r^3}\left(  1 - \sin^2 A ~\cos^2 B - 2\sin^2 B   \right) \cdot\mathbf{e}_3
\end{align*}
and therefore the parametric curvature map is
\[
\boxed{
R_{\alpha}(0)\left(\langle J \rangle \right)=\frac{3m}{r^3} \begin{pmatrix}
 -1 + \sin^2 A ~\cos^2 B + 2\sin^2 B  & 2 \cos A ~\cos B ~\sin B   \\
\\
 2\cos A ~\cos B ~\sin B  &  1- 2\sin^2 B  - \sin^2 A ~\cos^2 B  
\end{pmatrix} 
\begin{pmatrix}
\mu  \\
\lambda 
\end{pmatrix}
}
\]
where $\langle J \rangle \simeq 
\begin{pmatrix}
\mu  \\
\lambda 
\end{pmatrix}$, in the basis $\mathbf{e}_2,\mathbf{e}_3$, Eq. (\ref{eq:lambdamu}).

So, we get
\[
\boxed{ \rho_{\alpha}(0) = \mathrm{tr}\left(R_{\alpha}(0)\right) = 0  }
\]
and 
\[
\boxed{ D_{\alpha}(0) = \mathrm{det}\left(R_{\alpha}(0)\right) = -\left(\frac{3m (1- \sin^2 A ~\cos^2 B)}{r^3}\right)^2  }
\]

\subsubsection{The sky curvature}

Denote by $(\mathbf{E}_i)_{i=1,2,3,4}$ the parallel frame transported from $(\mathbf{e}_i)_{i=1,2,3,4}$ along $\gamma_{\alpha}$. 
Without any lack of generality, the computation of $R_{\alpha}(\tau)$ at $\gamma_{\alpha}(\tau)$ is still valid, so we have
\[
R_{\alpha}(\tau)=\frac{3m}{r^3} \begin{pmatrix}
 -1 + \sin^2 A ~\cos^2 B + 2\sin^2 B  & 2 \cos A ~\cos B ~\sin B   \\
\\
 2\cos A ~\cos B ~\sin B  &  1- 2\sin^2 B  - \sin^2 A ~\cos^2 B  
\end{pmatrix}
\]
where $A$, $B$ and $r$ are functions of the parameter $\tau$.

Now, we will compute the sky curvature $C_\Gamma (p)$ using Eq. (\ref{eq:Cp}).  We can write  
\[
R_{\alpha}(\tau)=\begin{pmatrix}
 R_1^1  & R_1^2   \\
R_2^1  &  R_2^2
\end{pmatrix} 
\]
where
\begin{align*}
R_1^1(\tau) & = \mathbf{g}\left( \langle R\left(\mathbf{E}_0 + \mathbf{E}_1,\mathbf{E}_2\right)(\mathbf{E}_0 + \mathbf{E}_1) \rangle ~, ~\mathbf{E}_2 \right) = P^{2}_{002}(\tau)+P^{2}_{102}(\tau)+P^{2}_{012}(\tau)+P^{2}_{112}(\tau)  \\
R_1^2(\tau) & = \mathbf{g}\left( \langle R\left(\mathbf{E}_0 + \mathbf{E}_1,\mathbf{E}_2\right)(\mathbf{E}_0 + \mathbf{E}_1) \rangle ~, ~\mathbf{E}_3 \right) = P^{3}_{002}(\tau)+P^{3}_{102}(\tau)+P^{3}_{012}(\tau)+P^{3}_{112}(\tau) \\
R_2^1 (\tau) & = \mathbf{g}\left( \langle R\left(\mathbf{E}_0 + \mathbf{E}_1,\mathbf{E}_3\right)(\mathbf{E}_0 + \mathbf{E}_1) \rangle ~, ~\mathbf{E}_2 \right) = P^{2}_{003}(\tau)+P^{2}_{103}(\tau)+P^{2}_{013}(\tau)+P^{2}_{113}(\tau) \\
R_2^2 (\tau) & = \mathbf{g}\left( \langle R\left(\mathbf{E}_0 + \mathbf{E}_1,\mathbf{E}_3\right)(\mathbf{E}_0 + \mathbf{E}_1) \rangle ~, ~\mathbf{E}_3 \right) = P^{3}_{003}(\tau)+P^{3}_{103}(\tau)+P^{3}_{013}(\tau)+P^{3}_{113}(\tau)
\end{align*}

Since $D_{\alpha}(\tau)=\det R_{\alpha} (\tau) $, then (recall Eqs. (\ref{eq:D'}), (\ref{eq:D''})):
\begin{align*}
D'_{\alpha}(\tau)&=\mathrm{tr}\left(\mathrm{Adj}(R_{\alpha}(\tau))R'_{\alpha}(\tau)\right)=\vert R_{\alpha} (\tau)\vert\cdot \mathrm{tr}\left(R^{-1}_{\alpha}(\tau)R'_{\alpha}(\tau)\right) \\
D''_{\alpha}(\tau)&=\vert R_{\alpha}(\tau) \vert\cdot \left[ \left(\mathrm{tr}\left(R^{-1}_{\alpha}(\tau)R'_{\alpha}(\tau)\right)\right)^2 - \mathrm{tr}\left(\left[R^{-1}_{\alpha}(\tau)R'_{\alpha}(\tau)\right]^2\right)+ \mathrm{tr}\left(R^{-1}_{\alpha}(\tau)R''_{\alpha}(\tau)\right) \right]
\end{align*}
where the prime $'$ means $\frac{d}{d\tau}$ . 
Hence, using Eq. (\ref{eq:CD}), we get:
\begin{equation}\label{eq-formula2-sky-curvat-R}
\boxed{ C_{\Gamma}\left(\gamma_{\alpha}\right)= \frac{1}{\sqrt{D_{\alpha}}}\left( R_{\alpha}+\frac{1}{8}\left[\mathrm{tr}\left(R^{-1}_{\alpha}R''_{\alpha}\right) - \mathrm{tr}\left(\left[R^{-1}_{\alpha}R'_{\alpha}\right]^2\right) -\frac{1}{8}\left( \mathrm{tr}\left(R^{-1}_{\alpha}R'_{\alpha}\right)\right)^2\right]\mathrm{Id}\right) }.
\end{equation}

In order to compute $R'_{\alpha}(0)$ and $R''_{\alpha}(0)$, we can observe that, for $j=2,3$:
\begin{align*}
 \left( \nabla_{\mathbf{E}_0 + \mathbf{E}_1}R\right) \left(\mathbf{E}_i,\mathbf{E}_j\right)\mathbf{E}_k & =  \nabla_{\mathbf{E}_0 + \mathbf{E}_1}\left(R\left(\mathbf{E}_i,\mathbf{E}_j\right)\mathbf{E}_k \right)- R\left(\nabla_{\mathbf{E}_0 + \mathbf{E}_1}\mathbf{E}_i,\mathbf{E}_j\right)\mathbf{E}_k - \\
 & - R\left(\mathbf{E}_i,\nabla_{\mathbf{E}_0 + \mathbf{E}_1}\mathbf{E}_j\right)\mathbf{E}_k - R\left(\mathbf{E}_i,\mathbf{E}_j\right)\nabla_{\mathbf{E}_0 + \mathbf{E}_1}\mathbf{E}_k 
\end{align*}
and, since $(\mathbf{E}_i)_{i=1,2,3,4}$ are parallel along $\gamma_{\alpha}$ and $\gamma'_{\alpha}(\tau)=\mathbf{E}_0 + \mathbf{E}_1$, then 
\begin{equation*}
 \left( \nabla_{\mathbf{E}_0 + \mathbf{E}_1}R\right) \left(\mathbf{E}_i,\mathbf{E}_j\right)\mathbf{E}_k  =  \nabla_{\mathbf{E}_0 + \mathbf{E}_1}\left(R\left(\mathbf{E}_i,\mathbf{E}_j\right)\mathbf{E}_k \right) = \frac{d}{d\tau}P^{n}_{kij}\cdot \mathbf{E}_{n}
\end{equation*}
and analogously we also get
\begin{equation*}
 \left( \nabla_{\mathbf{E}_0 + \mathbf{E}_1}\nabla_{\mathbf{E}_0 + \mathbf{E}_1}R\right) \left(\mathbf{E}_i,\mathbf{E}_j\right)\mathbf{E}_k  =   \frac{d^2}{d\tau^2}P^{n}_{kij}\cdot \mathbf{E}_{n} .
\end{equation*}
Therefore, at $\tau=0$ we have
\begin{equation*}
 \frac{d P^{n}_{kij}}{d\tau}(0)\cdot \mathbf{e}_{n} = \left( \nabla_{\mathbf{e}_0 + \mathbf{e}_1}R\right) \left(\mathbf{e}_i,\mathbf{e}_j\right)\mathbf{e}_k 
\end{equation*}
and analogously we also get:
\begin{equation*}
 \frac{d^2 P^{n}_{kij}}{d\tau^2}(0)\cdot \mathbf{e}_{n} =  \left( \nabla_{\mathbf{e}_0 + \mathbf{e}_1}\nabla_{\mathbf{e}_0 + \mathbf{e}_1}R\right) \left(\mathbf{e}_i,\mathbf{e}_j\right)\mathbf{e}_k   .
\end{equation*}

Since we only have to compute $P^{n}_{kij}$ and its derivatives for $n,j=2,3$ and $k,i=0,1$, then, the non--zero components are:
\begin{align*}
P^{2}_{002}(0)& = -P^{3}_{113}(0) =\frac{m}{r^3}\left(-2+3\sin^2 A + 3\sin^2 B - 3\sin^2 A~\sin^2 B \right) \, , \\
P^{2}_{112}(0)& = -P^{3}_{003}(0) =\frac{m}{r^3}\left(3\sin^2 B - 1\right)  \, ,  \\
P^{3}_{002}(0)& =  P^{3}_{112}(0) = P^{2}_{003}(0)=  P^{2}_{113}(0) = \frac{3m}{r^3}\cos A ~\cos B ~\sin B  \, ,
\end{align*}
and also:
\begin{align*}
\frac{d P^{2}_{002}}{d\tau}(0)& = \frac{3m}{r^{4}}\sqrt{\frac{r-2m}{r}}\cos B ~\sin A\left(5\cos^2 A~\cos^2 B -1 \right) \, , \\
\frac{d P^{2}_{112}}{d\tau}(0)& = \frac{3m}{r^{4}}\sqrt{\frac{r-2m}{r}}\cos B ~\sin A\left(5\cos^2 B -4 \right)\, , \\
\frac{d P^{3}_{002}}{d\tau}(0)& = \frac{15m}{r^{4}}\sqrt{\frac{r-2m}{r}}\cos^2 B ~\sin A\cos A ~\sin B  \, ,
\end{align*}
and, finally: 
\begin{align*}
\frac{d^2 P^{2}_{002}}{d\tau^2}(0)& = \frac{-3m}{r^6} \left[ r-3m +(13m-5r)\cos^2 B -2m \sin^2 A ~\cos^2 B - (75m-35r)\sin^2 A ~\cos^2 A ~\cos^4 B\right] \, ,\\
\frac{d^2 P^{2}_{112}}{d\tau^2}(0)& = \frac{-3m}{r^6} \left[ 3m - r -2m\sin^2 B -(5r-11m) \cos^2 A ~\cos^2 B + (75m-35r)\sin^2 A ~\sin^2 B ~\cos^2 B\right] \, , \\
\frac{d^2 P^{2}_{102}}{d\tau^2}(0)& =\frac{-3m^2}{r^6}\left(1-\sin^2 A ~\cos^2 B -2\sin^2 B\right) \, , \\
\frac{d^2 P^{3}_{002}}{d\tau^2}(0)& = \frac{-3m}{r^6} \left[ 62m -30r +(35r-75m)\left(1-\sin^2 A ~\cos^2 B\right)\right] \, , \\
\frac{d^2 P^{3}_{102}}{d\tau^2}(0)& = \frac{6m^2}{r^6}\cos A ~\cos B ~\sin B \, .
\end{align*}

Therefore, we obtain:
\begin{eqnarray*}
R'_{\alpha}(0) &=& \begin{pmatrix}
 (R')_1^1 (0)  & (R')_1^2 (0)  \\
(R')_2^1 (0)  &  (R')_2^2 (0)
\end{pmatrix}  = \\ &=& \frac{15m}{r^{4}}\sqrt{\frac{r-2m}{r}}\cos B ~\sin A \cdot \begin{pmatrix}
\cos^2 A ~\cos^2 B - \sin^2 B & 2\cos A ~\cos B ~\sin B     \\
2\cos A ~\cos B ~\sin B  &  \sin^2 B - \cos^2 A ~\cos^2 B 
\end{pmatrix} \, ,
\end{eqnarray*}
and
\begin{eqnarray*}
R''_{\alpha}(0) &=& \begin{pmatrix}
 (R'')_1^1 (0)  & (R'')_1^2 (0)   \\
(R'')_2^1 (0)  &  (R'')_2^2 (0)
\end{pmatrix} = \\ &=& \frac{15m}{r^{6}}\cdot f(r,A,B) \cdot \begin{pmatrix}
2\sin^2 B -1+ \sin^2 A ~\cos^2 B    & 2\cos A ~\cos B ~\sin B     \\
2\cos A ~\cos B ~\sin B  &  1- \sin^2 A ~\cos^2 B -2\sin^2 B
\end{pmatrix} \, ,
\end{eqnarray*}
where,
\[
f(r,A,B)= 6(r-2m) + (15m-7r)(1- \sin^2 A ~\cos^2 B) \, .
\]

Finally, if $p=p(A,B)=\gamma_{\alpha}(0)$, then substituting in (\ref{eq-formula2-sky-curvat-R}), we obtain:
\[
\boxed{ 
C_{\Gamma}(p) = \begin{pmatrix}
 \frac{1}{48m} \left[ (2m-15r) -\frac{5(r-2m)-96m\sin^2 B}{1-\cos^2 B ~\sin^2 A} \right] &  & \frac{2\cos A ~\sin B ~\cos B}{1-\cos^2 B ~\sin^2 A} \\
\\
 \frac{2\cos A ~\sin B ~\cos B}{1-\cos^2 B ~\sin^2 A} & & \frac{1}{48m} \left[ (98m-15r) -\frac{5(r-2m)+96m\sin^2 B}{1-\cos^2 B ~\sin^2 A} \right]
\end{pmatrix}
}
\]
and 
\begin{equation}\label{rhoext}
\boxed{
\rho_{\Gamma}(p) = \mathrm{tr}\left(C_{\Gamma}(p)\right) =  \frac{5}{24m} \left(10m-3r -\frac{r-2m}{1- \cos^2 B ~\sin^2 A }\right) 
}
\end{equation}
\[
\boxed{ \delta_{\Gamma} (p) =\mathrm{det}\left(C_{\Gamma}(p)\right)= \left[\frac{5}{48m}\left(10m-3r -\frac{r-2m}{1- \cos^2 A ~\sin^2 B }\right) \right]^2 - 1  }
\]

Therefore, the components of the $(1,1)$--tensor $C_{\mathrm{ext}}$ such that $(A,B)\mapsto \mathcal{C}_{\Gamma}(p)$, Eq. (\ref{eq:C}), in the exterior region of Schwarzschild spacetime, are given by:
\[
\left\{ 
\begin{tabular}{l}
$C^A_A  =  \frac{1}{48m} \left[ (2m-15r) -\frac{5(r-2m)-96m\sin^2 B}{1-\cos^2 B ~\sin^2 A} \right] = \left(\frac{\rho_{\Gamma}}{2}- 1\right) + \frac{2\sin^2 B}{1-\cos^2 B ~\sin^2 A}  $  \\
\\
$C^A_B = C_B^A =  \frac{2\cos A ~\sin B ~\cos B}{1-\cos^2 B ~\sin^2 A}  $  \\
\\
$C^B_B = \frac{1}{48m} \left[ (98m-15r) -\frac{5(r-2m)+96m\sin^2 B}{1-\cos^2 B ~\sin^2 A} \right] = \left( \frac{\rho_{\Gamma}}{2}+ 1\right)- \frac{2\sin^2 A}{1-\cos^2 B ~\sin^2 A}$  \\
\end{tabular}
\right.
\]
In this situation we get the conformal sign $\epsilon = -1$.  In fact from Eqs. (\ref{rhoext}), (\ref{eq:epsilon}), we get $\epsilon = \delta - \rho^2/4 = -1$. 


\subsection{The interior Schwarzschild spacetime}\label{sec:null-Schwartzschild-inner}

Now, we will assume that $0<r<2m$ and then $\frac{2m-r}{r}>0$.

The metric is defined by 
\[
ds^2 = - \left(\frac{2m}{r}-1\right)^{-1}dr^2 + \left(\frac{2m}{r}-1\right)dt^2 + r^2 \left( d\phi^2 + \sin^2 \phi ~d\theta^2 \right)
\]
and therefore $\frac{\partial}{\partial r}$ is timelike and $\frac{\partial}{\partial t}$ is spacelike.

We repeat the same calculation for $x^1=r$, $x^2=t$, $x^3=\phi$ and $x^4=\theta$ (Notice that we have swapped the roles between the variables $r$ and $t$).  
A lightlike vector $\alpha =\alpha^\mu \frac{\partial}{\partial x^\mu}$, must verify:
\[
 -\left(\frac{r}{2m-r}\right)~(\alpha^0)^2 + \left(\frac{2m-r}{r}\right)~(\alpha^1)^2 + r^2 ~(\alpha^2)^2 + r^2\sin^2 \phi ~(\alpha^3)^2 =0
\] 
then the null directions are characterized by $\alpha^1,\alpha^2,\alpha^3$ when $\beta^1$ is fixed. 
So, if we fix $\alpha^0=1$, then a null direction can be defined by:
\begin{equation}\label{eq-beta2}
\left\{
\begin{tabular}{l}
$\alpha^0 = 1 $ \\
\\
$\alpha^1 =\frac{r}{2m-r} \cos B \sin A$ \\
\\
$\alpha^2 = \frac{1}{r}\sqrt{\frac{r}{2m-r}} \sin B \sin A$ \\
\\
$\alpha^3 = \frac{1}{r\sin \phi}\sqrt{\frac{r}{2m-r}} \cos A $ .
\end{tabular}
\right.
\end{equation} 

An orthonormal basis $\boldsymbol{\varepsilon} = \left( \varepsilon_0, \varepsilon_1, \varepsilon_2, \varepsilon_3 \right)$ is given by
\[
\varepsilon_0=\sqrt{\frac{2m-r}{r}}\frac{\partial}{\partial r}, \quad \varepsilon_1=\sqrt{\frac{r}{2m-r}}\frac{\partial}{\partial t}, \quad \varepsilon_2=\frac{1}{r}\frac{\partial}{\partial \phi}, \quad \varepsilon_3=\frac{1}{r\sin \phi}\frac{\partial}{\partial \theta}
\]
then, at any $p=(r,t,\phi,\theta)$ the map given by $(A,B)\mapsto \alpha(A,B)$ where 
\[
\alpha(A,B)= \varepsilon_0 + \cos B \sin A \cdot\varepsilon_1 + \sin B \sin A \cdot\varepsilon_2 +\cos A \cdot\varepsilon_3 \in \mathbb{PN}_p = S(p)
\]
is a parametriwation of the sky $S(p)$.  Then we get that:
\[
\left( \mathbf{e}_0, \mathbf{e}_1, \mathbf{e}_2, \mathbf{e}_3 \right)= \boldsymbol{\varepsilon}\cdot \begin{pmatrix}
1 & 0 & 0 & 0 \\
\mathbf{0} & \boldsymbol{\alpha} & \frac{\partial\boldsymbol{\alpha}}{\partial A} & \frac{1}{\sin A}\frac{\partial\boldsymbol{\alpha}}{\partial B}
\end{pmatrix}
\]
is a orthonormal basis in $T_p M$, with $\boldsymbol{\alpha}$ as in Eq. (\ref{eq:alpha}).

Let us call $\gamma_{\alpha}$ the null geodesic such that $\gamma'_{\alpha}(0)=\alpha(A,B)=\mathbf{e}_0 + \mathbf{e}_1$ then, since the basis $ \left( \mathbf{e}_i \right)_{i=1,2,3,4}$ is orthonormal, we can consider $\left( \mathbf{e}_2, \mathbf{e}_3 \right)$ as a basis of $\langle \gamma_{\alpha}(0)^{\perp} \rangle$.

\subsubsection{The parametric curvature}

If we consider $J = \mu \mathbf{e}_2 + \lambda \mathbf{e}_3 \in  \langle \gamma_{\alpha}(0)^{\perp} \rangle$ then, using the same procedure as for the exterior Schwartzschild spacetime, we obtain again that the parametric curvature is
\[
\boxed{
R_{\alpha}(0)\left(\langle J \rangle \right)=\frac{3m}{r^3} \begin{pmatrix}
 -1 + \sin^2 A ~\cos^2 B + 2\sin^2 B  & 2 \cos A ~\cos B ~\sin B   \\
\\
 2\cos A ~\cos B ~\sin B  &  1- 2\sin^2 B  - \sin^2 A ~\cos^2 B  
\end{pmatrix} 
\begin{pmatrix}
\mu  \\
\lambda 
\end{pmatrix}
}
\]
where $\langle J \rangle \simeq 
\begin{pmatrix}
\mu  \\
\lambda 
\end{pmatrix}$.

Therefore
\[
\boxed{ \rho_{\alpha}(0) = \mathrm{tr}\left(R_{\alpha}(0)\right) = 0  }
\]
and 
\[
\boxed{ D_{\alpha}(0) = \mathrm{det}\left(R_{\alpha}(0)\right) = -\left(\frac{3m (1- \sin^2 A ~\cos^2 B)}{r^3}\right)^2  }
\]

\subsubsection{The sky curvature}

Again, in an analogous calculation we have that if $p=p(A,B)=\gamma_{\alpha}(0)$, then 
\[
\boxed{ 
C_{\Gamma}(p) = \begin{pmatrix}
 \frac{1}{48m} \left[ (12m-20r) -\frac{5(2m-r)-96m\sin^2 B}{1-\cos^2 B ~\sin^2 A} \right] &  & \frac{2\cos A ~\sin B ~\cos B}{1-\cos^2 B ~\sin^2 A} \\
\\
 \frac{2\cos A ~\sin B ~\cos B}{1-\cos^2 B ~\sin^2 A} & & \frac{1}{48m} \left[ (108m-20r) -\frac{5(2m-r)+96m\sin^2 B}{1-\cos^2 B ~\sin^2 A} \right]
\end{pmatrix}
}
\]
and 
\begin{equation}\label{rhoint}
\boxed{
\rho_{\Gamma}(p) = \mathrm{tr}\left(C_{\Gamma}(p)\right) =  \frac{5}{24m} \left(4\left(3m-r\right) - \frac{2m-r}{1- \cos^2 B ~\sin^2 A }\right) 
}
\end{equation}
\[
\boxed{ \delta_{\Gamma} (p) =\mathrm{det}\left(C_{\Gamma}(p)\right)= \left[\frac{5}{48m}\cdot\left(4\left(3m-r\right) - \frac{2m-r}{1- \cos^2 B ~\sin^2 A }\right) \right]^2 - 1  }
\]

Therefore, the components of the $(1,1)$--tensor $C_{\mathrm{int}}$ such that $(A,B)\mapsto C_{\Gamma}(p)$ in the interior region of Schwarzschild spacetime, are given by: 
\[
\left\{ 
\begin{tabular}{l}
$C^A_A  =  \frac{1}{48m} \left[ (12m-20r) -\frac{5(2m-r)-96m\sin^2 B}{1-\cos^2 B ~\sin^2 A} \right] = \left(\frac{\rho_{\Gamma}}{2}- 1\right) + \frac{2\sin^2 B}{1-\cos^2 B ~\sin^2 A}  $  \\
\\
$C^A_B = C_B^A = \frac{2\cos A ~\sin B ~\cos B}{1-\cos^2 B ~\sin^2 A}  $  \\
\\
$C^B_B = \frac{1}{48m} \left[ (108m-20r) -\frac{5(2m-r)+96m\sin^2 B}{1-\cos^2 B ~\sin^2 A} \right] = \left( \frac{\rho_{\Gamma}}{2}+ 1\right)- \frac{2\sin^2 A}{1-\cos^2 B ~\sin^2 A}$  \\
\end{tabular}
\right.
\]

\subsection{Continuity of $\rho$ and $\delta$ at the Schwarztschild radius $r=2m$}

Let us call 
\[
\chi=\chi(A,B)=1-\cos^2 B~\sin^2 A
\]
for both regions (interior and exterior) of Schwartzschild spacetime, although it does not have the same meaning because the vectors $\mathbf{E}_2$ and $\mathbf{E}_3$ can not be extended to both regions in a continuous way. And let us denote by $\rho_{\chi}(r)=\rho(r,\chi)$ the expressions (\ref{rhoext}) and (\ref{rhoint}).

But since the lateral limits of $\rho_{\chi}(p)$ at $r=2m$ exist and coincide
\[
\lim_{r\mapsto 2m^{-}} \rho_{\chi}(r)=\frac{5}{6} = \lim_{r\mapsto 2m^{+}} \rho_{\chi}(r)
\]
then $\rho_{\chi}(r)$ can be extended continuously to $r=2m$.

And because $\delta_{\chi}(r)=\left( \frac{\rho_{\chi}(r)}{2} \right)^2 -1$, then 
\[
\lim_{r\mapsto 2m^{-}} \delta_{\chi}(r)=-\frac{119}{144} = \lim_{r\mapsto 2m^{+}} \delta_{\chi}(r)
\]
so $\rho_{\chi}(r)$ also extends continuously to $r=2m$.

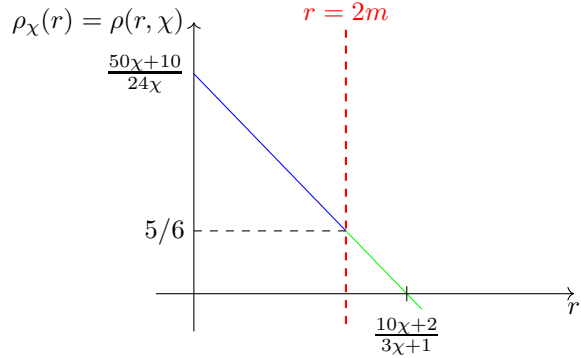
\begin{figure}[!h]
\begin{tikzpicture}[scale=1]
\draw[blue] (0,2.9166) -- (2,0.833) ;
\draw (0,2.9166) node[anchor=east] {$\frac{50\chi + 10}{24\chi}$}; 
\draw[green] (2,0.833) -- (3,-0.20833) ;
\draw (2.8,-0.2) node[anchor=north] {$\frac{10\chi + 2}{3\chi +1}$};
\draw (2.8,-0.1) -- (2.8,0.1);
\draw[->] (0,-0.5) -- (0,3.6) node[anchor=east] {$\rho_{\chi}(r)=\rho(r,\chi)$};
\draw[->] (-0.5,0) -- (5,0) node[anchor=north] {$r$};
\draw[red,thick,dashed] (2,3.5) node[anchor=south] {$r=2m$} -- (2,-0.5);
\draw[dashed] (0,0.833) node[anchor=east] {$5/6$} -- (2,0.833);
\end{tikzpicture}
\caption{Observe that $\lim_{r\mapsto 2m}\rho_{\chi}(r)=\frac{5}{6}$ for all $\chi=1-\cos^2 B~\sin^2 A \in \left(0,1\right]$ for both the \textcolor{blue}{interior} and the \textcolor{green}{exterior}.}
\end{figure}

\begin{figure}[!h]
\begin{tikzpicture}[scale=1]
\foreach \h in {1,0.3,0.02}{
      \draw ({((2+10*\h)/(1+3*\h))+(12/5)*(\h/(1+3*\h))},-0.5)  -- ({((2+10*\h)/(1+3*\h))-(84/5)*(\h/(1+3*\h))},3.5) ; 
      } 
\filldraw[color=white] (-2,2) rectangle (0,3.6);
\draw (3.6,-0.5) node[anchor=north west] {\tiny{$\chi=1$}};
\draw (3.01,-0.5) node[anchor=north] {\tiny{$\chi=0.3$}};
\draw (2,-0.5) node[anchor=north] {\tiny{$\chi=0.02$}};
\draw[->] (0,-0.5) -- (0,3.6) node[anchor=east] {$\rho_{\chi}(r)=\rho(r,\chi)$};
\draw[->] (-0.5,0) -- (5,0) node[anchor=north] {$r$};
\draw[red,thick,dashed] (2,3.5) node[anchor=south] {$r=2m$} -- (2,-0.5);
\draw[dashed] (0,0.833) node[anchor=east] {$5/6$} -- (2,0.833);
\draw[->,blue, bend left=45, dashed] (0.3,2.3) to (2,3.2); 
\draw[blue] (0.8,3.3) node {\tiny{$\chi \mapsto 0$}}; 
\draw (-0.1,2.5) node[anchor=east] {$5/2$} -- (0.1,2.5);
\draw (3,-0.1) -- (3,0.1) node[anchor=south] {$3$};
\end{tikzpicture}
\caption{The slope of the the straight line $y=\rho_{\chi}(r)$ decreases to $-\infty$ when $\chi \mapsto 0$.}
\end{figure}
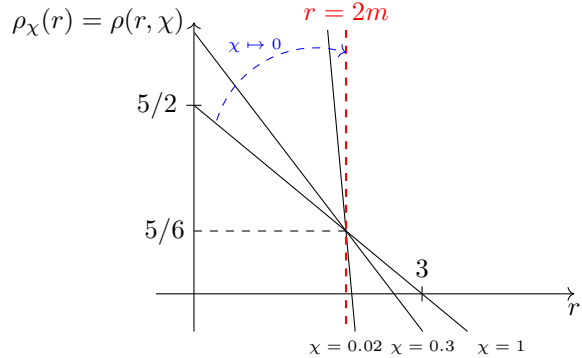

Notice that the lateral limits when $r\mapsto 2m$ of the sky matrices corresponding to the exterior and interior regions also coincide with 
\[
\boxed{ 
\lim_{r\mapsto 2m^{-}} C_{\Gamma}(p) = \begin{pmatrix}
 \frac{1}{48} \left[ -28 +\frac{96\sin^2 B}{1-\cos^2 B ~\sin^2 A} \right] &  & \frac{2\cos A ~\sin B ~\cos B}{1-\cos^2 B ~\sin^2 A} \\
\\
 \frac{2\cos A ~\sin B ~\cos B}{1-\cos^2 B ~\sin^2 A} & & \frac{1}{48} \left[ 68 -\frac{96\sin^2 B}{1-\cos^2 B ~\sin^2 A} \right]
\end{pmatrix} = \lim_{r\mapsto 2m^{+}} C_{\Gamma}(p) \, .
}
\]

\section{Conclusions and discussion}

A novel notion of conformal invariants has been introduced using the space of light rays $\mathcal{N}$ associated to a given spacetime $M$ to construct them.  More specifically, the canonical bundle over the spacetime $M$ whose fibre at each point is the corresponding sky has been used and the conformal invariants have been constructed exploiting the conformal geometry of light rays.   Each light ray $\Gamma$ carries a canonical conformal covariant derivative that can be used to construct an endomorphism $R_\gamma$ on the tangent spaces of the sky containing it.      In addition the existence of a distinguished parametrisation by a conformal invariant parameter $s$ is used to associate a family of scalar absolute conformal invariant or, equivalently, of sky conformal curvatures, to the given spacetime.   

The definition of the conformal parametrisation together with the transformation properties of the tensors $R_\gamma$ under reparametrisations allows us to construct an algorithm that can be implemented on any symbolic manipulation language and that has been successfully used to compute the sky curvatures of Schwarzschild spacetime.   

Other conformal invariants can be obtained from the sky curvature tensor, for instance, its principal directions.   In what sense these new conformal invariants characterise the conformal class of the original spacetime?

Another set of relevant questions emerge from the notion of conformal invariants themselves, as the definition provided in the paper suggest its extension to larger classes of objects, suitably described using a categorical language, subject that will be discussed elsewhere.


\section*{Acknowledgments}
The authors acknowledge financial support from the Spanish Ministry of Economy and Competitiveness, through the Severo Ochoa Programme for Centres of Excellence in RD (SEV-2015/0554), the MINECO research project  PID2020-117477GB-I00,  and Comunidad de Madrid project QUITEMAD++, S2018/TCS-A4342.


\newpage


\end{document}